\tikzset{set/.style={draw,circle,inner sep=0pt,align=center}}
\algnewcommand{\IfThen}[2]{
  \State \algorithmicif\ #1\ \algorithmicthen\ #2}
\algnewcommand{\IfThenElse}[3]{
  \State \algorithmicif\ #1\ \algorithmicthen\ #2\ \algorithmicelse\ #3}
  \providecommand\BibTeX{{%
    \normalfont B\kern-0.5em{\scshape i\kern-0.25em b}\kern-0.8em\TeX}}}
\begin{document}

\title{Models and algorithms for simple disjunctive temporal problems}

\author{Carlo S. Sartori}
\email{carlo.sartori@cs.kuleuven.be}
\orcid{0000-0003-2140-2925}
\author{Pieter Smet}
\email{pieter.smet@cs.kuleuven.be}
\orcid{0000-0002-3955-7725}
\author{Greet Vanden Berghe}
\email{greet.vandenberghe@cs.kuleuven.be}
\orcid{0000-0002-0275-5568}
\affiliation{%
  \institution{\\KU Leuven, Department of Computer Science}
  \streetaddress{Gebroeders De Smetstraat 1}
  \city{Gent}
  \country{Belgium}
  \postcode{9000}
}


\begin{abstract}
  Simple temporal problems represent a powerful class of models capable of describing the
  temporal relations between events that arise in many real-world
  applications such as logistics, robot planning and management systems. The classic simple temporal problem
  permits each event to have only a single release and due date. In this paper, we focus on
  the case where events may have an arbitrarily large number of release and due dates. This type of problem, however, has been referred to by various names. In order to simplify and standardize nomenclatures, we introduce the name Simple Disjunctive Temporal Problem. We provide three mathematical models to describe this problem using constraint programming and linear programming. To efficiently solve simple disjunctive temporal problems, we design two new algorithms inspired by previous research, both of which exploit the problem's structure to significantly reduce their space complexity.
  Additionally, we implement algorithms from the literature and provide the first in-depth empirical study comparing methods to
  solve simple disjunctive temporal problems across a wide range of experiments. Our analysis and conclusions offer guidance for future researchers and practitioners when tackling similar temporal constraint problems in new applications. All results, source code and instances are made publicly available to further assist future research.
\end{abstract}


\begin{CCSXML}
<ccs2012>
<concept>
<concept_id>10002950.10003624.10003633.10010917</concept_id>
<concept_desc>Mathematics of computing~Graph algorithms</concept_desc>
<concept_significance>500</concept_significance>
</concept>
<concept>
<concept_id>10010147.10010178.10010187.10010193</concept_id>
<concept_desc>Computing methodologies~Temporal reasoning</concept_desc>
<concept_significance>500</concept_significance>
</concept>
 <concept>
<concept_id>10010147.10010178.10010199</concept_id>
<concept_desc>Computing methodologies~Planning and scheduling</concept_desc>
<concept_significance>500</concept_significance>
</concept>
</ccs2012>
\end{CCSXML}

\ccsdesc[500]{Computing methodologies~Temporal reasoning}
\ccsdesc[500]{Mathematics of computing~Graph algorithms}
\ccsdesc[200]{Computing methodologies~Planning and scheduling}

\keywords{Simple temporal problem, disjunctions, consistency checking, shortest paths, empirical analysis}

\maketitle

\section{Introduction}

Simple Temporal Problems (STPs) provide a formal structure to describe possible time relations between events. These time relations feature in a wide variety of real-world problems \cite{art:time21} and include precedences, maximum elapsed time, and a single release and due date per event. A major advantage of STPs is that they are solvable in polynomial time with standard shortest path methods~\cite{art:stp}. Nevertheless, researchers and practitioners are still fairly limited with respect to what can be modeled using STPs. Even an otherwise simple feature such as multiple release and due dates per event cannot be expressed with STPs.

Alternatively, Disjunctive Temporal Problems (DTPs) offer a much broader framework for describing time relations. However, this expressiveness is offset by the fact that they usually represent NP-Complete problems \cite{art:dtps}. Simple Disjunctive Temporal Problems (SDTPs) are a primitive type of DTP which generalize STPs and retain efficient polynomial-time solution methods. SDTPs extend STPs by enabling multiple, non-overlapping release and due dates per event.

Throughout the academic literature, SDTPs are referred to by many names: \textsc{Star} class of problems, zero-one extended STPs and t\textsubscript{2}DTPs. While it is difficult to know for sure why exactly so much terminology exists for the same problem, one could speculate that it might be that different researchers have each arrived at SDTPs from different angles. Although the particular reason for this terminological variance is not our main concern, it is clear that it results in a highly inefficient scenario. Researchers and practitioners alike end up being held back by the burdensome task of needing to discover what is known about SDTPs when the findings are catalogued under different names. Moreover, even when one does locate existing literature concerning SDTPs, those research papers are primarily theoretical in nature and provide neither a practical implementation of the methods nor empirical insights concerning how the behavior of those algorithms compares under different scenarios.

We have experienced precisely the situation outlined when trying to compare different approaches for scheduling tasks with multiple time windows in logistical problems such as vehicle routing with synchronizations \cite{art:vrpms,art:delsynch}, pickup and delivery with transshipment \cite{art:pdpt}, dial-a-ride with transfers \cite{art:darpt} and truck driver scheduling with interdependent routes \cite{art:ssvb-1}. SDTPs are an excellent model for scheduling in these problems. Nevertheless, one needs extremely efficient SDTP methods when solving these logistical problems via local search heuristics, for example. Given the fact that the literature is not only difficult to navigate but also lacks empirical results, we needed to (i) find existing methods, (ii) implement them and (iii) evaluate their advantages and limitations in different cases before employing SDTPs in our applications.

The goal of this paper is therefore fourfold. First, we propose a standard nomenclature to refer to SDTPs so that future researchers can refer to the same problem by the same name. Second, we explore existing methods and develop new algorithms to solve SDTPs that are capable of not only reducing the theoretical asymptotic worst-case time and space complexities but also ensuring good performance in practice. Third, we provide an empirical study alongside open source implementations of all of the techniques and also make our instances publicly available with the aim of avoiding duplicate work\footnote{The complete code repository will be made available at a later date.}. Fourth and finally, we hope this paper will serve as a foundation for researchers and practitioners who would like to apply SDTPs in their work and build upon our research and implementations to easily achieve their goals.

\section{The simple disjunctive temporal problem} \label{sec:sdtp}

Let us begin by formally defining simple temporal problems, which will be useful when introducing simple disjunctive temporal problems. 

\begin{definition}[Simple Temporal Problem \cite{art:stp}]\label{def:stp}
A \textit{Simple Temporal Network} (STN) is denoted $N=(T,C)$ where $T$ is the set of \textit{variables} or \textit{time-points} and $C$ is the set of binary constraints relating variables of $T$. A time-point $i \in T$ has a closed domain $[l_i,u_i],\ l_i,u_i \in \mathbb{R}$. Constraints in $C$ are \textit{simple temporal constraints} given as a tuple $(i,j,w_{ij}) \in C,\ i,j \in T,\  w_{ij} \in \mathbb{R}$ which corresponds to Equation~\ref{eq:stc}:
\begin{equation}
    s_i - s_j \leq w_{ij} \label{eq:stc}
\end{equation}
\noindent where $s_i,s_j \in \mathbb{R}$ denote the solution values assigned to time-points $i$ and $j$. 

 The STP involves determining whether its associated STN is consistent. A network $N$ is consistent iff a \textit{feasible schedule} or \textit{solution} $s$ can be derived such that times $s_i$ assigned to each $i \in T$ respect all constraints present in $C$ and the domains of each time-point.
\end{definition}

 \citet{art:stp} showed that an STN can be represented with a distance graph where time-points $T$ are nodes and constraints $C$ are arcs connecting these nodes. First, let us associate a special time-point $\alpha$ with domain $[0,0]$ to represent the beginning of the time horizon $s_\alpha = 0$. This fixed point can be used to write unary constraints such as domain boundaries over time-points in $T$ as simple temporal constraints. For example, the bound $[l_i,u_i]$ for $i \in T$ can be written as:
 \begin{align*}
     s_\alpha - s_i &\leq -l_i\\
     s_i - s_\alpha &\leq u_i
 \end{align*}
 
 We can then associate two distance graphs with STN $N=(T,C)$: the \textit{direct graph} $G_D=(V,A_D)$ and the \textit{reverse graph} $G_R=(V,A_R)$. For both of these graphs $V=T \cup \{\alpha\}$. Arc set $A_D= C \cup \{(\alpha,i,-l_i), (i,\alpha,u_i)\ \forall\ i \in T\}$ for which an element $(i,j, w_{ij}) \in A_D$ denotes an arc from node $i$ to $j$ with weight $w_{ij}$. Meanwhile, $A_R$ is the same as $A_D$ but with the direction of each arc reversed: $(i,j,w_{ij}) \in A_D$ is $(j,i,w_{ij}) \in A_R$.

Determining consistency of an STN then reduces to verifying the existence of negative-cost cycles in either $G_D$ or $G_R$. If there is no negative-cost cycle, the shortest path distance $\tau_{\alpha i}$ from node $\alpha \in V$ to every other node $i \in V \backslash \{\alpha\}$ provides a feasible schedule. When computed over $G_D$, $s_i = -\tau_{\alpha i}$ provides the \textit{earliest feasible schedule}. Meanwhile, when computed over $G_R$, $s_i=\tau_{\alpha i}$ provides the \textit{latest feasible schedule}. The earliest feasible schedule can be defined as the solution $s$ for which given any other feasible solution $s'$ to the SDTP, it holds that $s_i \leq s'_i,\ \forall\ i \in T$. Similarly, for the latest feasible schedule it holds that $s_i \geq s'_i,\ \forall\ i \in T$.

There are many algorithms that can be used to detect negative-cost cycles quickly in a distance graph \cite{art:sp-fp}. One of the most simple is \textsc{BellmanFord} \cite{book:cormen}. Indeed, this is an algorithm employed by most methods to solve SDTPs. For the remainder of this paper, we always consider \textsc{BellmanFord} to refer to its implementation as a label-correcting algorithm using a first-in, first-out queue \cite{art:sp-fp}.

STPs can only accommodate time-points for which the domain is a single interval. In order to tackle problems where time-points may be assigned values in one of several disjunctive intervals, a more expressive model is required. Let us now turn our attention to the main problem in this paper: the simple disjunctive temporal problem.

\begin{definition}[Simple Disjunctive Temporal Problem]\label{def:sdtp}
A \textit{Simple Disjunctive Temporal Network} (SDTN) is denoted $N=(T,C)$, where $T$ is the set of time-points and $C=C_1 \cup C_2$ are the constraints over these time-points. Based on the classification introduced by \citet{art:kra}, constraints in $C_1$ are Type 1 while those in $C_2$ are Type 2.
\begin{enumerate}
    \item[] (Type 1) Simple temporal constraints $(i,j,w_{ij}) \in C_1,\ i,j \in T$ representing Equation \ref{eq:stc}
    \item[] (Type 2) Simple disjunctive constraints $(i,D_i) \in C_2$, where $i \in T$ and $D_i$ is a list of \textit{intervals} or \textit{domains} denoted $[l^c_i,u^c_i] \in D_i,\ l^c_i,u^c_i \in \mathbb{R}$ representing the disjunction
    \begin{equation*}
     \bigvee_{[l^c_i,u^c_i] \in D_i} (l^c_i \leq s_i \leq u^c_i)   
    \end{equation*}
    Note that $C_2$ includes unary constraints  ($|D_i|=1$). $T_D \subseteq T$ denotes the set of all time-points for which $|D_i| > 1$. 
\end{enumerate}

In order to solve the SDTP, we need to determine whether SDTN $N$ is consistent. Similar to STPs, $N$ is consistent iff a feasible solution $s$ can be derived which respects both constraints $C_1$ and $C_2$. 

\end{definition}

We assume that domains in $D_i$ are sorted in ascending order \cite{art:kra,art:cra}. Let $K = \max_{(i,D_i) \in C_2} |D_i|$ denote the largest number of domains for any given time-point and $\omega = \sum_{(i,D_i) \in C_2} |D_i|$ denote the total number of domains in the instance. Let us further denote by $L(D_i)$ and $U(D_i)$ the lower and upper bound values in all domains of $i \in T$, respectively. The \textit{global boundary} of $i$ is given by $[L(D_i),U(D_i)]$ such that $s_i$ must belong to this boundary. However, some values within these bounds can still be infeasible. In other words: the domains of time-points are not continuous.

The existence of Type 2 constraints means we cannot solve the problem directly via shortest paths. Nevertheless, we can use the global boundaries of the time-points to redefine graphs $G_D$ and $G_R$ with $A_D=C_1 \cup \{(\alpha,i,-L(D_i)), (i,\alpha,U(D_i))\ \forall\ (i,D_i) \in C_2\}$ and $A_R$ ($A_D$ with all arc directions reversed). These graphs can be used to compute \textit{lower-} and \textit{upper-bound solutions} for the SDTP while employing shortest path algorithms in the same way as for STPs. If a negative cycle exists in $G_D$ or $G_R$ when considering these global boundaries, then the associated SDTP instance is definitely infeasible.

Once a domain $d_i \in D_i$ has been selected for each time-point $(i,D_i) \in C_2$, the SDTP reduces to an STP. Indeed, some of the special-purpose algorithms available in the literature exploit this problem structure to solve SDTP instances. Section \ref{sec:algs} will discuss this further.

\subsection{Related problems and classification}

As noted in this paper's introduction, the SDTP has been referred to by various names in the literature. \citet{art:ult} term it the \textsc{Star} class of problems because the multiple domains per time-point create connections to the beginning of time $\alpha$, which resembles the shape of a star. \citet{art:kumar-estp} refers to the problem as zero-one extended STPs, where subintervals of a time-point's domain are associated with a weight that is either 0 when the interval is infeasible, or~1 when the interval is feasible. An SDTP solution has therefore been constructed when the sum of the weights of selected intervals is $|C_2|$. Meanwhile, \citet{art:kra} introduced Restricted Disjunctive Temporal Problems (RDTPs) which contain constraints of Type 1, 2 and 3\footnote{Type 3 constraints consider two different time-points $i,j \in T,\ i\neq j$ and relate them via a disjunction with exactly two terms in the form $(l'_i   \leq s_i \leq u'_i) \lor (l'_j \leq s_j \leq u'_j)$, where $l'_i,u'_i,l'_j,u'_j \in \mathbb{R}$ denote bounds for time-points $i$ or $j$. This type of constraint is not handled in this paper but interested readers are referred to \cite{art:kra,art:cra} for more information about them.}. The SDTP therefore arises as a special case of RDTPs when there are no Type 3 constraints. \citet{art:cra} also refer to SDTPs as t\textsubscript{2}DTPs, framing them as DTPs that only contain constraints of Type 1 and 2. 

In a move to simplify and unify nomenclatures, we have decided to introduce the name \textit{Simple Disjunctive Temporal Problem} following the same reasoning behind the naming of STPs \cite{art:stp}. Figure~\ref{fig:tcps-class} below situates the SDTP within the larger scheme of DTPs.

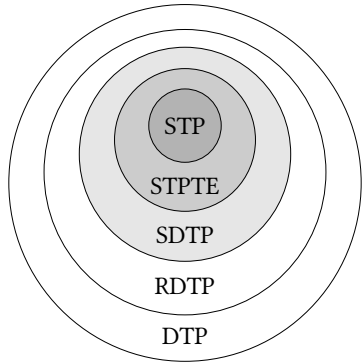
\begin{figure}[h]
  \begin{center}
  \resizebox{0.35\linewidth}{!}{
    \begin{tikzpicture}
    \node[set,text width=5cm,label={[below=125pt of dtp,text opacity=1]DTP}] 
      at (0,-0.8) (dtp) {};
      \node[set,text width=4cm,label={[below=95pt of dtp,text opacity=1]RDTP}] 
      at (0,-0.65) (rdtp) {};
      \node[set,fill=gray!20,text width=3cm,label={[below=68pt of rdtp,text opacity=1]SDTP}] 
      at (0,-0.4) (sdtp) {};
      \node[set,fill=gray!40,text width=2cm,label={[below=40pt of sdtp]STPTE}] 
      (stpt) at (0,-0.2)  {};
      \node[set,fill=gray!60,text width=1cm] (stp) at (0,0) {STP};
    \end{tikzpicture}
    }
    \caption{Classification of DTPs in a set diagram.}
        \label{fig:tcps-class}
  \end{center}
\end{figure}

SDTPs generalize STPs since the latter can be cast as an SDTP for which $|D_i|=1,\ \forall\ i \in T$. They also generalize the Simple Temporal Problem with Taboo regions featuring both instantaneous Events and processes of constant duration (STPTE) \cite{art:stpts}. This class of problems differs from SDTPs because STPTEs define common intervals when no time-point can be scheduled rather than individual intervals per time-point. This clearly demonstrates how SDTPs generalize STPTEs. However, when the duration of processes can vary within an interval in STPs with taboo regions, then SDTPs cannot generalize them because Type 3 constraints are needed \cite{art:stpts}. Finally, RDTPs generalize all of the aforementioned problems while DTPs further generalize RDTPs. The gray area in Figure \ref{fig:tcps-class} represents the problems that the models and algorithms in this paper address. 

Another problem related to SDTPs is the time-dependent STP \cite{art:td-stp}. While it might not be an obvious connection at first, in the time-dependent version of STPs the weight $w_{ij}$ in Type 1 constraints is not a constant but rather a function $f(s_i,s_j)$ which depends on the values assigned to the time-points. When $f(s_i,s_j)$ is a piecewise linear and partial function over the global boundary $[L(D_j),U(D_j)]$, it is possible to cast the SDTP as a time-dependent STP. In this case, the function is defined between $\alpha$ and every $j \in T$, that is $f(s_\alpha,s_j)$. The pieces of function $f(s_\alpha,s_j)$ represent the domains of time-point $j$. This relation has not previously been established in the literature and one of the possible reasons could be that \citet{art:td-stp} originally focused more on total functions given that each time-point had a single domain in their application. The effect of partial functions in the development of algorithms will be discussed further in Section \ref{sec:algs}. We opted not to include the time-dependent STP in Figure \ref{fig:tcps-class} so as to maintain a clear relation between problems that are often considered together in the temporal reasoning literature, namely those that deal with disjunctions. Nevertheless, the connection we have established has important implications for computing solutions to SDTPs.

\subsection{Constraint programming model}

Constraint Programming (CP) tools are widely used in planning and scheduling domains. Hence, it is worth considering whether CP is a good candidate for solving SDTPs in practice. The corresponding CP model is: 
\begin{align}
  s_{i} - s_{j} \leq w_{ij}, &\quad \forall\ (i,j,w_{ij}) \in C_1 \label{cp:1}\\
  \bigvee_{[l^k_i,u^k_i] \in D_i} (l^k_i \leq s_i \leq u^k_i), &\quad \forall\ (i,D_i) \in C_{2} \label{cp:2}
\end{align}
\noindent which is actually the same set of equations as those in Definition \ref{def:sdtp}. This is very convenient because we essentially have a one-to-one mapping between the classic definition of SDTPs and their CP formulation. For simplicity, we will refer to model (\ref{cp:1})-(\ref{cp:2}) as CP.

A simplified CP formulation can be written as follows:
\begin{align}
  s_{i} - s_{j} \leq w_{ij}, &\quad \forall\ (i,j,w_{ij}) \in C_1\label{cps:1}\\
  L(D_i) \leq s_i \leq U(D_i), &\quad \forall\ (i,D_i) \in C_2\label{cps:2}\\
  s_i \notin \Phi_i, &\quad \forall\ i \in T\label{cps:3}
\end{align}

Constraints (\ref{cps:1}) are the same as Equation \ref{eq:stc}, while Constraints (\ref{cps:2}) model the global boundaries of time-points. Constraints (\ref{cps:3}) are the \textit{compatibility constraints} and serve as a replacement for disjunctive Constraints (\ref{cp:2}). Set $\Phi_i$ contains the enumeration of all infeasible assignments to $i \in T$ that belong to the interval $[L(D_i),U(D_i)]$. In other words: $\Phi_i = \{u^1_i+1,u^1_i+2,\dots,  l^2_i-1,u^2_i+1,\dots,l^k_i-1\},\ k=|D_i|$. Adding these constraints is only possible if we make the additional assumption that $s_i \in \mathbb{Z}$. However, given that typical CP tools only operate with integer variables, this assumption is not necessarily restrictive in practice. We will refer to the formulation defined by (\ref{cps:1})-(\ref{cps:3}) as \textit{Simplified Constraint Programming} (SCP).

\subsection{Integer linear programming model}

 Integer Linear Programming (ILP) is also often used in the planning and scheduling domains, which motivated us to also formulate the SDTP in ILP form. First, let us define the binary decision variable $x^c_i$ which takes value 1 whenever solution value $s_i$ belongs to domain $[l^c_i,u^c_i] \in D_i,\ (i,D_i) \in C_2$ and 0 otherwise. The corresponding ILP model for SDTPs is:
 
\begin{align}
  s_i - s_{j} \leq w_{ij}, &\quad \forall\ (i,j,w_{ij}) \in C_1\label{ilp:1}\\ 
  l^c_i - M^L_i(1-x^c_i) \leq s_i, &\quad \forall\ (i,D_i) \in C_{2},\ [l^c_i,u^c_i] \in D_i \label{ilp:2}\\ 
  s_i \leq u^c_i + M^U_i(1-x^c_i), &\quad \forall\ (i,D_i) \in C_{2},\ [l^c_i,u^c_i] \in D_i\label{ilp:3}\\ 
  \sum_{[l^c_i,u^c_i] \in D_i} x^c_i = 1, &\quad \forall\ (i,D_i) \in C_{2}\label{ilp:4}\\ 
  x^c_i \in \{0,1\}, &\quad \forall\ (i,D_i) \in C_{2},\ [l^c_i,u^c_i] \in D_i \label{ilp:5}
\end{align}

Constraints (\ref{ilp:1}) refer to the simple temporal constraints (Equation \ref{eq:stc}). Meanwhile, Constraints (\ref{ilp:2})-(\ref{ilp:3}) restrict the values assigned to solution $s$ so that they belong to the active bounds defined by variables $x^c_i$. Note that Constraints (\ref{ilp:2})-(\ref{ilp:3}) are big-$M$ constraints. They can be tightened by setting, for each $(i, D_i) \in C_2$:
\begin{align*}
M^L_i &= \max_{[l^c_i,u^c_i]\in D_i} l^c_i - L(D_i)\\
M^U_i &= U(D_i) - \min_{[l^c_i,u^c_i]\in D_i} u^c_i    
\end{align*}
\noindent Constraints (\ref{ilp:4}) ensure that exactly one domain is selected per time-point $i \in T$. Finally, Constraints~(\ref{ilp:5}) restrict $x^c_i$ variables to take binary values. Recall that an SDTP is a feasibility problem, with this explaining why there is no objective function present in this ILP.

\bigskip

All three models (ILP, CP and SCP) can be used to solve SDTPs by employing a state-of-the-art solver such as IBM's CPLEX. However, these solvers are often financially expensive. Furthermore, specific methods can provide guarantees concerning expected run times, such as asymptotic polynomial worst-case time complexity. In the following section, we describe many algorithms that can be used to quickly solve SDTPs in practice.

\section{Algorithms} \label{sec:algs}

A variety of special-purpose algorithms have been proposed for SDTPs. All of the algorithms that are presented in this section will be implemented for our computational experiments. It is worth noting that all of the algorithms provide a guaranteed polynomial asymptotic worst-case time complexity. Algorithms are presented in chronological order of publication date. In some cases, we adapted algorithms to ensure they could be implemented efficiently in practice. For that reason, we try to provide as many implementation details as possible. In all cases where details are missing, we refer interested readers to our code for deeper inspection.

We assume that each algorithm receives as input an SDTP instance containing network $N=(T,C)$ and associated graphs $G_D$ and $G_R$. Some algorithms also receive additional structures, which we detail for the individual method whenever necessary. All algorithms return a solution vector $s$. When the SDTP instance is feasible, each entry $s_i$ contains a time assigned to $i \in T$ which in combination with the other entries renders the solution feasible (network $N$ consistent). Whenever the SDTP instance is infeasible, $s=\emptyset$ is returned.

\subsection{Upper-Lower Tightening}

\citet{art:ult} introduced the \textit{Upper-Lower Tightening} (ULT) algorithm to tackle general disjunctions in DTPs. The original intention behind ULT was to tighten disjunctive constraints and simplify DTP instances. However, \citet{art:ult} were the first to show that SDTPs could be solved in polynomial time by means of ULT.

The ULT algorithm operates with constraints between two variables denoted as an interval. The first step is to therefore define set $H = \{ (i,j)\ \forall\ (i,j,w_{ij}) \in C_1 \} \cup \{ (\alpha,i)\ \forall\ (i,D_i) \in C_2\}$. Let us further assume that $(i,j) \in H \implies (j,i) \notin H$. \textit{Boundary} sets $B_{ij}$ are defined for $(i,j) \in H\ :\ i \neq \alpha$, $B_{ij}=\{[-w'_{ji}, w'_{ij}] \}$ where $w'_{ij} = w_{ij}\text{ if } (i,j,w_{ij}) \in C_1$, otherwise $w'_{ij} = +\infty$, and $w'_{ji} = w_{ji}\text{ if } (j,i,w_{ji}) \in C_1$, otherwise $w'_{ji} = +\infty$. Meanwhile, for $(\alpha,i) \in H$ we relate $i$ to the beginning of time $\alpha$ via $B_{\alpha i}=D_i\ :\  (i,D_i) \in C_{2}$. We will use the notation $L(B_{ij})$ and $U(B_{ij})$ to denote the lower and upper bounds in $B_{ij}$, respectively.

Algorithm \ref{alg:ult} outlines how ULT works. First, a distance matrix $\delta$ is initialized in line 1. The main loop of the algorithm spans lines 2-8. In lines 3-4, some entries of the distance matrix $\delta$ are updated according to the current bounds $B$ of each pair $(i,j) \in H$. \textsc{FloydWarshall} \cite{book:cormen} is then used to update matrix $\delta$ by computing All-Pairs Shortest Paths (APSPs) using the current values in $\delta$ as the arc weights (line 5). A temporary boundary set $B'$ is created in line 6 with the newly computed values in $\delta$. Note that in the implementation itself we do not create $B'$ since we can use matrix $\delta$ directly in its place whenever needed (for example, in line 7). The intersection of $B'$ and $B$ is computed in line 7. Here, we follow the definition of the $\cap$ operation introduced by \citet{art:ult}: it returns a set of intervals whose values are permitted by both $B'$ and $B$. ULT iterates so long as there are changes to the bounds in $B$, denoted by operation $\textsc{Change}$, and no bound is either empty or infeasible. All checks in line 8 can be performed in $O(1)$ time by maintaining the correct flags after lines 6-7. Similarly, lines 3-4 can be performed during operation $\cap$ in line 7 without increasing the asymptotic worst-case time complexity. Lines 9-11 prepare solution $s$ to be returned. If the instance is feasible then line 10 assigns the earliest feasible schedule to $s$, otherwise $\emptyset$ is returned.

\begin{algorithm}[H]
  \caption{ULT}
  \label{alg:ult}
  \footnotesize
  \begin{algorithmic}[1] 
    \State $\delta_{ij} \gets +\infty,\ \forall\ i,j \in T \cup \{\alpha\}$
    \Do
    \State $\delta_{ij} \gets U(B_{ij}),\ \forall\ (i,j) \in H$ \Comment{Update current $\delta$ entries with new bounds}
    \State $\delta_{ji} \gets -L(B_{ij}),\ \forall\ (i,j) \in H$
    \State $\textsc{FloydWarshall}(\delta)$ \Comment{Update distance matrix $\delta$}
    \State $B'_{ij} \gets \{[-\delta_{ji},\delta_{ij}]\},\ \forall\ (i,j) \in H$
    \State $B \gets B \cap B'$ \Comment{Tightens boundaries}
    \DoWhile{$\textsc{Change}(B) \textbf{ and } (B_{ij} \neq \emptyset \textbf{ and } L(B'_{ij}) \leq U(B'_{ij}),\ \forall (i,j) \in H)$}
    \State $s \gets \emptyset$
    \IfThen{$B_{ij} \neq \emptyset \textbf{ and } L(B'_{ij}) \leq U(B'_{ij}),\ \forall (i,j) \in H$}{$s_i \gets L(B_{\alpha i}),\ \forall\ i \in T$}
    \State \Return $s$
  \end{algorithmic}
\end{algorithm}

The asymptotic worst-case time complexity of ULT is $O(|T|^3|C|K + |C|^2K^2)$ \cite{art:ult}, while its space complexity is $O(|T|^2)$ due to distance matrix $\delta$. Despite its apparently high computational complexity, ULT is a polynomial time algorithm. Additionally, \citet{art:ult} noted that even when a problem instance contains multiple disjunctions per constraint between time-points $i,j \in T$, and is therefore not an SDTP instance, ULT may successfully remove sufficient disjunctions to reduce the problem to an SDTP. In this case, ULT is guaranteed to solve the problem exactly. This is the only algorithm in our study capable of such a reduction.

\subsection{Kumar's Algorithm}

\citet{art:kumar-estp} proposed a polynomial time algorithm to solve zero-one extended STPs, which essentially correspond to an SDTP. Algorithm \ref{alg:kra} provides a pseudocode outline of how \textit{Kumar's Algorithm} (KA) works. In line 1, a distance matrix $\delta$ is constructed by computing APSPs over graph $G_R$. This step can detect infeasibilities such as if there exists a negative cycle formed by $C_1$ constraints and global boundaries, in which case $\delta = \emptyset$ is returned. 

Matrix $\delta$ can be computed by employing (i) \textsc{FloydWarshall}, (ii) repeated calls to \textsc{BellmanFord} or (iii) Johnson's Algorithm \cite{book:cormen}. \citet{art:kumar-estp} did not specify which method should be used when computing $\delta$ and therefore we will consider both options (ii) and (iii). Option (i) is disregarded due to its overall poor performance during our preliminary experiments.

\begin{algorithm}
  \caption{KA}
  \label{alg:kra}
  \footnotesize
  \begin{algorithmic}[1] 
    \State $\delta \gets \textsc{ComputeDistanceMatrix}(G_R)$
    \IfThen{$\delta = \emptyset$}{\textbf{return} $\emptyset$}
    \State $G_C \gets \textsc{CreateConflictGraph}(\delta, C_2)$ \Comment{Graph $G_C=(E,A_C)$}
    \IfThen{$G_C = \emptyset$}{\textbf{return} $\emptyset$}
    \State $G_B \gets \textsc{CreateBipartiteGraph}(G_C)$ \Comment{Graph $G_B=(E,E',A_B)$, where $E'$ is a copy of $E$}
    \State $G_F \gets \textsc{SolveMaxFlow}(G_B)$ \Comment{From source $\theta_1$ to sink $\theta_2$, with $G_F$ corresponding to the residual graph}
    \State $S \gets \{(\theta_1,e^c_i)\ :\ e^c_i \notin R(G_F, \theta_1)\} \cup \{(e^{k\prime}_j,\theta_2)\ :\ e^{k\prime}_j \in R(G_F, \theta_1)\}$ \Comment{Minimum cut in $G_F$}
    \State $S' \gets \{e^c_i\ :\ (\theta_1,e^c_i) \in S\ \lor (e^{c\prime}_i,\theta_2) \in S\}$ \Comment{Vertex cover for $G_C$}
    \State $S'' \gets E \backslash S'$
    \IfThen{$|S''| \neq |T|$}{\textbf{return} $\emptyset$}
    \State $\textsc{UpdateGraph}(G_R,S'')$
    \State $s \gets \textsc{BellmanFord}(G_R, \alpha)$ \Comment{Solve STP}
    \State \Return $s$
  \end{algorithmic}
\end{algorithm}

Line 3 proceeds to create a conflict graph $G_C=(E,A_C)$ with the domains from the SDTP. First, set $E$ of intervals is defined as $E = \{ e^c_{i}\ :\ (i,D_i) \in C_2,\ [l^c_i,u^c_i] \in D_i \}$. Hence, every element $e^c_{i} \in E$ represents exactly one domain of a time-point. A domain $[l^c_i,u^c_i]$ has no corresponding element in $E$ if it produces a size-1 conflict, that is, if the following is true:
\begin{equation*}
    \delta_{i\alpha} + u^c_i < 0\ \lor\ \delta_{\alpha i} - l^c_i < 0
\end{equation*}

Once vertex set $E$ has been created, arc set $A_C$ can be defined. An arc $(e^c_{i},e^k_{j}) \in A_C$ denotes a size-2 conflict between two time-point domains $[l^c_i,u^c_i]$ and $[l^k_j,u^k_j]$. Such a conflict occurs whenever:
\begin{equation*}
    u^c_i + \delta_{ij} - l^k_j < 0
\end{equation*}
Note that size-2 conflicts are also defined between domains of the same time-point $i \in T$. There is always a conflict $(e^c_{i},e^{c+1}_{i}) \in A_C$ because $\delta_{ii}=0$ and $u^c_i < l^{c+1}_i$ (recall from Section \ref{sec:sdtp} that domains are in ascending order).

Procedure \textsc{CreateConflictGraph} returns either graph $G_C$ or $\emptyset$. The latter is returned whenever all domains of a time-point $i \in T$ produce size-1 conflicts. In this case no domain associated with $i$ is included in $E$, thereby implying that the SDTP instance is infeasible. Once graph $G_C$ has been constructed, line 5 creates a bipartite graph $G_B=(E,E',A_B)$ by copying every element $e^c_{i} \in E$ to $e^{c\prime}_{i} \in E'$. For each $(e^c_{i},e^k_{j}) \in A_C$ we create an arc $(e^c_{i},e^{k\prime}_{j}) \in A_B$. All arcs in $A_B$ connect an element of $E$ to an element of $E'$.

Line 6 solves a maximum bipartite matching over $G_B$ as a maximum flow problem (max-flow), producing the residual graph $G_F$ \cite{book:cormen}. To solve the problem in the form of a max-flow, we introduce a source node $\theta_1$ and a sink node $\theta_2$ to $G_B$. Arcs $(\theta_1,e^c_{i}),\ \forall\ e^c_{i} \in E$ and $(e^{k\prime}_{j},\theta_2),\ \forall\ e^{k\prime}_{j} \in E'$ are included in the graph together with all arcs in $A_B$. Additionally, all arcs are given unitary capacity. Then, it suffices to solve a max-flow from $\theta_1$ to $\theta_2$ to produce $G_F$. 

The minimum-cut $S$ is computed in $G_F$ thanks to the max-flow min-cut theorem (line 7). $R(G_F,\theta_1)$ denotes the set of nodes that are reachable from source $\theta_1$ in $G_F$ (meaning there is a path with positive residual capacity). Line 8 merges node copies in $S$ to create $S'$, which is a vertex cover for $G_C$ when seen as an undirected graph. Since $S'$ is a vertex cover, if we take all elements in $E$ which are not part of $S'$ to create set $S''$ (line 9), there will be no two elements in $S''$ which have a conflict. In other words: all domains in $S''$ can be part of a feasible SDTP solution.

If $|S''| = |T|$ then every time-point has exactly one domain assigned to it, that is, $S''_i = e^c_{i},\ \forall\ i \in T$. If $|S''| < |T|$ the instance is infeasible (line 10). Line 11 continues to update graph $G_R$ with the information in $S''$ concerning the selected domain for each time-point:
\begin{align*}
 (\alpha,i,w_{\alpha i}) \in A_R \implies w_{\alpha i}=U(S''_i)\\
 (i,\alpha,w_{i\alpha}) \in A_R \implies w_{i\alpha}=-L(S''_i)
\end{align*}
\noindent where $U(S''_i)=u^c_i$ and $L(S''_i)=l^c_i$. The final solution $s$ is computed with standard \textsc{BellmanFord} since the SDTP has now been reduced to a feasible STP (line 12).

Note that in our implementation we do not explicitly create graphs $G_C$ and $G_B$. Instead, we directly create max-flow graph $G_F$. This graph is also modified by \textsc{SolveMaxFlow} to produce the residual graph. By taking this approach, we reduce both KA's execution time and the amount of memory it requires. Conceptually, however, it is easier to explain how KA works by documenting the creation of each graph in a step-by-step fashion.

\citet{art:kumar-estp} did not provide the asymptotic worst-case time complexity of KA and instead suggested that KA runs in polynomial time because each step can be performed in polynomial time. Therefore, for the purpose of completeness, we will now explicitly analyse the time complexity of KA. The three algorithmic components which dictate KA's complexity can be found on lines 1, 3 and 6. All other parts of the algorithm can be completed in time which is never slower than these three main components.

Line 1 takes time $O(|T|^2|C|)$ if computed with repeated \textsc{BellmanFord} and time $O(|T||C| + |T|^2\log |T|)$ if computed with Johnson's algorithm provided Dijkstra's algorithm \cite{book:cormen} is implemented with Fibonacci Heaps \cite{art:fib-heaps}. However, Fibonacci Heaps are often inefficient in practice due to pointer operations leading to poor cache locality and performance \cite{art:splib,art:heaps}. We therefore opted to use Sequence Heaps \cite{art:sequence-heaps} in our implementation, which increases the asymptotic worst-case time complexity to $O(|T||C|\log |T|)$ but improves the performance of the algorithm in practical settings. Line 3 has complexity $O(\omega^2)$ because we need to check every pair of intervals in $E$ and $|E| = O(\omega)$. Line 6 solves a max-flow problem. While there are many algorithms to solve max-flow \cite{art:max-flow}, we opted to use Dinic's Algorithm with complexity $O(\omega^\frac{5}{2})$ when applied to graphs from maximum bipartite matching. We have observed that max-flow is not the bottleneck in KA. Indeed, lines 1 and 3 are the most time-consuming steps (see Section \ref{sec:discussion} for a full discussion).

In the remainder of the paper, we will refer to the version of KA using repeated \textsc{BellmanFord} as KAB, and the one using Johnson's algorithm as KAJ. We will write KA when referring to the algorithm in a generic sense which covers both KAB and KAJ. The complexity of KAB is $O(|T|^2|C| + \omega^\frac{5}{2})$, while KAJ's is $O(|T||C|\log |T| + \omega^\frac{5}{2})$. Meanwhile, the space complexity of KA is $O(|T|^2 + \omega^2)$ due to distance matrix $\delta$ and graph $G_F$.

\subsection{Comin-Rizzi Algorithm}

\citet{art:cra} introduced asymptotically faster algorithms to solve both SDTPs and RDTPs, making their methods the current state of the art for both problems. For SDTPs, they introduced an algorithm which resembles Johnson's Algorithm for APSPs. Their method begins by performing a first phase using \textsc{BellmanFord} to detect negative cycles, while subsequent iterations use Dijkstra's Algorithm to correct computations over a graph that contains no negative cycles. However, no experimental study has been conducted using this method until now.

The \textit{Comin-Rizzi Algorithm} (CRA) for SDTPs is detailed in Algorithm \ref{alg:cra}. CRA begins by computing an initial earliest feasible solution $s^0$ considering $C_1$ constraints only. In our implementation, we partially consider $C_2$ constraints by using the global boundaries defined in Section \ref{sec:sdtp} within $G_D$. The computation of $s^0$ then either produces the earliest possible solution or proves that one cannot exist because (i) there is a negative cycle formed by $C_1$ constraints or (ii) it is not possible to assign a time $s_i$ to at least one time-point $i \in T$ while complying with the global bounds $[L(D_i),U(D_i)]$.

If $s^0 \neq \emptyset$ then CRA proceeds to its main loop. First, each time-point $i \in T$ where the current solution $s^0_i$ does not belong to one of the domains $D_i$ is added to list $F$ of assignments that require fixing. While there are elements in $F$, the following steps are repeated (lines 5-12). A time-point $i$ is removed from $F$ (line 6). The first time $i$ is removed from $F$, we compute entry $\delta_i$ of the distance matrix $\delta$ from $i$ to all other nodes in the underlying graph $G^{1\prime}_R$ containing only $C_1$ constraints (lines 7-9). In this graph, the weight $w_{ij}$ of each arc $(i,j,w_{ij}) \in A_R$ is modified to $w'_{ij}=w_{ij} + s^0_j - s^0_i$. \citet{art:cra} showed that $G^{1\prime}_R$ cannot contain negative cycles because it is always true that $w'_{ij} \geq 0$. Therefore, distances $\delta_i$ can be computed using \textsc{Dijkstra} instead of \textsc{BellmanFord}, which greatly improves the performance of CRA. Each entry $\delta_i$ is only computed once because $G^{1\prime}_R$ remains unchanged during CRA's execution.

\begin{algorithm}
  \caption{CRA}
  \label{alg:cra}
  \footnotesize
  \begin{algorithmic}[1] 
    \State $s^0 \gets \textsc{BellmanFord}(G_D, \alpha)$ \Comment{Solve STP using SDTP global boundaries}
    \IfThen{$s^0 = \emptyset$}{\textbf{return} $\emptyset$}
    \State $s \gets s^0$
    \State $F \gets \{i\ :\ (i,D_i) \in C_2 \land s_i \notin D_i\}$ \Comment{Set of all time-points $i \in T$ with assignment $s_i$ infeasible}
    \While{$F \neq \emptyset \textbf{ and } s \neq \emptyset \textbf{ and } s_i \leq U(D_i)\ \forall (i,D_i) \in C_2$}
    \State $i \gets \textsc{Pop}(F)$
    \If{$\delta_i \textbf{ not yet computed}$}
    \State $\delta_i \gets \textsc{Dijkstra}(G^{1\prime}_D, i)$ \Comment{Lazy computation of $\delta$}
    \EndIf
    \State $\textsc{UpdateAssignments}(s,s^0,i,\delta_i)$
    \State $F \gets \{i\ :\ (i,D_i) \in C_2 \land s_i \notin D_i\}$
    \EndWhile
    \State \Return $s$
  \end{algorithmic}
\end{algorithm}

For each $i$ taken from $F$ in line 6, we update the assignment to $s_i$ by means of procedure \textsc{UpdateAssignments} (line 10). First, the procedure performs the following operation
\begin{equation*}
    s_i \leftarrow \lambda(s_i,D_i)\ :\ (i,D_i) \in C_2
\end{equation*}
\noindent where $\lambda(s_i,D_i)$ is a function that either returns value $l^c_i$ belonging to the first domain in ascending order $[l^c_i,u^c_i] \in D_i$ for which $s_i < l^c_i$, or it returns $\perp$ if no such domain exists. Whenever $\lambda(s_i,D_i) = \perp$, CRA stops computations because this proves that the instance is infeasible. In this case, \textsc{UpdateAssignments} sets $s=\emptyset$. Alternatively, if $\lambda(s_i,D_i) \neq \perp$ then the new assignment $s_i$ can cause changes to other time-point assignments since $s_i$ has necessarily increased. To correctly propagate these changes, \citet{art:cra} introduced the following update rules
\begin{align*}
    \rho_{ij} &\leftarrow \delta_{ij} + (s_j - s^0_j) - (s_i - s^0_i),\ &\forall\ j \in P(G^1_D, i)\\
    s_j &\leftarrow s_j + \max(0, \lambda(s_i,D_i) - s_i - \rho_{ij}),\ &\forall\ j \in P(G^1_D, i)
\end{align*}
\noindent where $P(G^1_D,i)$ denotes the set of all nodes $j \in V$ which are reachable from $i$ in $G^1_R$. In other words: there is a path from $i$ to $j$ in $G^1_D$. These update rules can be applied in $O(1)$ time per $j \in P(G^1_D,i)$ or $O(|T|)$ time in total.

After fixing the assignment to $i$ and potentially other time-points, CRA constructs a new list $F$ (line 11). Once $F = \emptyset$, the assignment in $s$ is feasible and corresponds to the earliest feasible solution. This assignment is then returned in line 13. For infeasible instances, $s=\emptyset$ is returned instead.

The asymptotic worst-case time complexity of CRA is $O(|T||C| + |T|^2\log |T| + |T|\omega)$ when using Fibonacci Heaps for \textsc{Dijkstra}'s computations. The asymptotic complexity increases to $O(|T||C|\log |T| + |T|\omega)$ when using Sequence Heaps instead, however the empirical performance improves significantly \cite{art:sequence-heaps}. Regardless of the heap implementation, CRA's space complexity is $O(|T|^2)$ due to distance matrix $\delta$.

In their original description of CRA, \citet{art:cra} precomputed distance matrix $\delta$ before beginning the main loop in Algorithm \ref{alg:cra}. For our implementation, we describe the computation as a \textit{lazy computation} of entries in $\delta$ given that we only compute them when strictly necessary (lines 7-9). Although both approaches exhibit the same asymptotic worst-case time complexity, in practice the lazy computation performs significantly better since many unnecessary computations are avoided. Additionally, we have incorporated the creation of list $F$ at line 11 into procedure \textsc{UpdateAssignments}. Whenever the assignment $s_j$ to a time-point $j \in T$ is modified, we check whether $j$ should be added to or removed from $F$. This avoids reconstructing list $F$ every iteration of the main loop (lines 5-12), thus speeding up computations.

\subsection{Reduced Upper-Lower Tightening}

The \textit{Reduced Upper-Lower Tightening} (RULT) method is a speedup of ULT, specifically targeted towards SDTPs. One can easily derive RULT from ULT by exploiting the structure of SDTPs. Recall that in ULT, we must compute APSPs using \textsc{FloydWarshall} with complexity $O(|T|^3)$ because \citet{art:ult} assumed the input was a general DTP with possibly multiple disjunctions per constraint between two time-points $i,j \in T$. 

However, SDTPs feature a structure that only contains simple temporal constraints between time-points in $T$. It is therefore sufficient to compute single-source shortest paths twice: first to determine the earliest feasible assignment for each time-point and a second time to determine the latest feasible assignment for each time-point. This creates a single interval per time-point denoting a possibly  tighter global boundary concerning their assignments. Similar to ULT, we can use this global boundary to reduce $C_2$ disjunctions in every iteration, thereby reducing the number of disjunctions.

Algorithm \ref{alg:rult} outlines RULT. First, boundary set $B$ is initialized with the domains of each time-point (lines 1-2). In contrast to ULT, we only have to maintain boundaries per $i \in T$ rather than per constraint. The main loop (lines 3-11) runs for as long as there are changes to $B$ and the bounds remain feasible. In every iteration graph $G_D$ is changed with \textsc{UpdateGraph}, which replaces the weight of arcs connected to $\alpha$: 
\begin{align*}
 (\alpha,i,w_{\alpha i}) \in A_D \implies w_{\alpha i}=-L(B_i)\\
 (i,\alpha,w_{i\alpha}) \in A_D \implies w_{i\alpha}=U(B_i)
\end{align*}
\noindent The same procedure takes place for $G_R$ but outgoing arcs from $\alpha$ get the upper bound $U(B_i)$ while the incoming arcs get the lower bound $-L(B_i)$. During RULT's execution, values $L(B_i)$ are non-decreasing and $U(B_i)$ are non-increasing. Hence, updating the graphs tightens the global boundary $B_i$ of each time-point $i \in T$.

\begin{algorithm}
  \caption{RULT}
  \label{alg:rult}
  \footnotesize
  \begin{algorithmic}[1] 
    \State $B_i \gets \{[-\infty,+\infty]\},\ \forall\ i \in T$
    \State $B_i \gets D_i,\ \forall\ (i,D_i) \in C_2$
    \Do
    \State $\textsc{UpdateGraph}(G_D,B)$ \Comment{Update arc weights connected to $\alpha$}
    \State $\textsc{UpdateGraph}(G_R,B)$
    \State $p \gets \textsc{BellmanFord}(G_D,\alpha)$ \Comment{Earliest feasible assignment}
    \State $q \gets \textsc{BellmanFord}(G_R,\alpha)$ \Comment{Latest feasible assignment}
    \IfThen{$p = \emptyset$ \textbf{ or } $q = \emptyset$}{\textbf{return} $\emptyset$}
    \State $B'_i \gets \{[-p_i,q_i]\},\ \forall\ i \in T$
    \State $B \gets B \cap B'$ \Comment{Tightens boundaries}
    \DoWhile{$\textsc{Change}(B) \textbf{ and } L(B_{i}) \leq U(B_{i})\ \forall\ i \in T$}
    \State $s \gets \emptyset$
    \IfThen{$L(B_{i}) \leq U(B_{i})\ \forall\ i \in T$}{$s_i \gets L(B_{i}),\ \forall\ i \in T$}
    \State \Return $s$
  \end{algorithmic}
\end{algorithm}

Lines 6-7 compute the earliest feasible schedule $p$ and the latest feasible schedule $q$ over the updated graphs. If $p = \emptyset$ or $q = \emptyset$ then the instance is infeasible, because a negative cycle still exists even for the relaxed global boundaries of all time-points (line 8). Otherwise, line 9 constructs set $B'$ and line 10 computes the intersection of $B$ and $B'$. Operation $\cap$ is the same used in ULT and defined by \citet{art:ult}. Finally, lines 12-14 prepare solution $s$ to be returned. If boundaries in $B$ are feasible, line 13 assigns to every time-point its earliest feasible value. If the latest feasible solution is desired instead, we can assign $U(B_i)$ to $s_i$ in line 13.

The correctness of RULT follows directly from that of ULT \cite{art:ult} in combination with the fact that $C_1$ constraints are fixed and the only intervals that must be considered are those in $C_2$. The asymptotic worst-case time complexity of RULT is similar to ULT's. Accounting for the efficiency gain in shortest path computations, which are performed with \textsc{BellmanFord} instead of  \textsc{FloydWarshall}, RULT's complexity becomes: $O(|T|^2|C|K + |T|^2|K|^2)$. The space complexity of RULT is reduced to $O(|T|)$ given that we only have to allocate additional vectors of size $|T|$. Note that in our implementation, we do not explicitly maintain boundary set $B$.

\subsection{Bellman-Ford with Domain Check}

All of the algorithms described until now have employed \textsc{BellmanFord} at some point during their execution. This should not be surprising since \textsc{BellmanFord} can be implemented rather efficiently to detect negative cycles \cite{art:sp-fp}, which is a core task when solving STPs, SDTPs and RDTPs. It seems only natural then to consider a variant of the original algorithm to solve SDTPs. Let us therefore define \textit{Bellman-Ford with Domain Check} (BFDC), which incorporates small changes to \textsc{BellmanFord} in order to address gaps of infeasible values in the shortest path computations. Our method draws inspiration from previous research on temporal problems \cite{art:cra,art:stp-bf-inc,art:td-stp}.

Algorithm \ref{alg:bfdc} describes the full BFDC procedure, which primarily works over graph $G_D$. Lines 1-5 involve the initialization of auxiliary variables. This includes the distance array $\tau$, path length array $\pi$ which calculates the number of nodes in the shortest path from $\alpha$ up to $i \in V$, the domain index array $z$ which holds the current domain index $z_i$ for each time-point $i \in T$ and the first-in, first-out queue $Q$ used in \textsc{BellmanFord}. After initialization, the main loop begins (lines 6-20). In line 7, an element $i$ is removed from the queue and its domain is checked in line 8. Procedure \textsc{DomainCheck} is detailed in Algorithm \ref{alg:check}. If \textsc{DomainCheck} can prove the SDTP instance is infeasible, then it sets $\tau = \emptyset$. Otherwise the procedure updates assignments to $\tau$, $\pi$ and $z$ as necessary. The algorithm continues to line 9 where, if the instance has not been proven infeasible yet, all outgoing arcs from $i \in V$ are relaxed and the shortest paths propagated (here \textit{relax} refers to the nomenclature of \citet{book:cormen}).

\begin{algorithm}
  \caption{BFDC}
  \label{alg:bfdc}
  \footnotesize
  \begin{algorithmic}[1] 
    \State $\tau_i \gets +\infty,\ \forall\ i \in T\cup \{\alpha\}$
    \State $\pi_i \gets 0,\ \forall\ i \in T\cup \{\alpha\}$
    \State $z_i \gets 1,\ \forall\ i \in T\cup \{\alpha\}$
    \State $Q \gets \textsc{Push}(Q, \alpha)$
    \State $\tau_\alpha \gets 0$
    \While{$Q \neq \emptyset \textbf{ and } \tau \neq \emptyset$}
    \State $i \gets \textsc{Pop}(Q)$
    \State $\textsc{DomainCheck}(i,\tau,z,\pi)$ \Comment{Algorithm \ref{alg:check}}
    \If{$\tau \neq \emptyset$}
     \For{$(i,j,w_{ij}) \in A_D$}\Comment{Standard \textsc{Relax} phase in \textsc{BellmanFord}}
        \If{$\tau_j > \tau_i + w_{ij}$}
            \State $\tau_j \gets \tau_i + w_{ij}$
            \State $\pi_j \gets \pi_i + 1$
            \IfThen{$\pi_j \geq |T| \textbf{ or } j = \alpha$}{$\tau \gets \emptyset$} \Comment{Checks for negative cycle}
            \IfThen{$j \notin Q$}{$Q \gets \textsc{Push}(Q, j)$}
        \EndIf
        \IfThen{$\tau = \emptyset$}{\textbf{break}}
    \EndFor
    \EndIf
    \EndWhile
    \IfThenElse{$\tau \neq \emptyset$}{$s_i \gets -\tau_i,\ \forall\ i \in V$}{$s \gets \emptyset$}
    \State \Return $s$
  \end{algorithmic}
\end{algorithm}

For each outgoing arc from $i$ in $A_D$ (recall graph $G_D=(V,A_D)$), line 10 checks whether the current shortest path up to $j$ adjacent to $i$ should be updated and, if so, then the algorithm also updates $\pi_j$ and possibly queue $Q$. In line 14, if the path up to $j$ forms a cycle or the path leads back to $\alpha$, the instance is determined to be infeasible. If $j$ is not yet in queue $Q$, we add it in line 15 (duplicated elements are not allowed). When the instance has been proven infeasible, line 17 aborts the for-loop (lines 10-18).

The main loop runs for as long as there are elements in $Q$ and $\tau$ is not $\emptyset$. Once one of these conditions is false, Algorithm \ref{alg:bfdc} proceeds on to line 21. If the instance is feasible, assignment $s$ is created  using the values of the shortest paths stored in $\tau$, otherwise $\emptyset$ is returned.

Procedure \textsc{DomainCheck} (Algorithm \ref{alg:check}) represents the main difference between \textsc{BellmanFord} and BFDC. In line 1, it verifies whether the current time $s_i=-\tau_i$ assigned to $i$ belongs to its current domain indexed at $z_i$. If the assigned time does not exceed the domain's upper bound $u^{z_i}_i$ then \textsc{DomainCheck} simply terminates. Otherwise, the algorithm searches for the first domain in increasing order to which $s_i=-\tau_i$ belongs (lines 2-8). When a domain is found, lines 4-5 update the assignments for $\tau_i$ and $\pi_i$ accordingly. In case $s_i=-\tau_i$ exceeds all domains in $D_i$ then we have a proof that the instance is infeasible (line 9).

\begin{lemma}
    Algorithm \ref{alg:bfdc} is correct and returns either (i) the earliest feasible solution or (ii) proof that no solution exists. \label{lem:bfdc-correct}
\end{lemma}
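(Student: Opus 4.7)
The plan is to establish correctness via a lower-bound invariant combined with a termination argument. The central claim to maintain throughout execution is: for every $i \in T \cup \{\alpha\}$ and every feasible solution $s^*$ of the SDTN (if one exists), we have $-\tau_i \leq s^*_i$. I would prove this invariant by induction over the sequence of mutations to $\tau$, handling each mutation type separately.

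First I would verify the base case. After lines 1--5, $\tau_\alpha = 0 = s^*_\alpha$ and $\tau_i = +\infty$ for all other $i$, so $-\tau_i = -\infty \leq s^*_i$ holds trivially. For the inductive step on the relax phase (lines 10--18), the update $\tau_j \gets \tau_i + w_{ij}$ preserves the invariant because the constraint $(i,j,w_{ij}) \in A_D$ enforces $s^*_i - s^*_j \leq w_{ij}$, and combining with $s^*_i \geq -\tau_i$ yields $s^*_j \geq -\tau_i - w_{ij} = -\tau_j^{\text{new}}$; this is the standard Bellman-Ford argument adapted to distance graphs. For \textsc{DomainCheck}, if $-\tau_i > u^{z_i}_i$ then $s^*_i > u^{z_i}_i$ by induction, so $s^*_i$ must lie in a domain $[l^c_i,u^c_i]$ with $c > z_i$. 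The procedure advances $z_i$ to the smallest $c$ with $-\tau_i \leq u^c_i$, which is necessarily a lower bound on the index of the domain actually containing $s^*_i$; the new value of $-\tau_i$ is then either unchanged or set to $l^c_i$, both still bounded above by $s^*_i$. If no such $c$ exists, then $s^*_i > U(D_i)$, contradicting feasibility, which justifies returning $\emptyset$.

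Second I would establish that upon normal termination (empty queue, $\tau \neq \emptyset$) the returned $s$ is actually feasible. Since the queue is empty, no further relaxation strictly decreases any $\tau_j$, so $\tau_j \leq \tau_i + w_{ij}$ holds for all $(i,j,w_{ij}) \in A_D$, which translates to satisfaction of all Type 1 constraints via $s_i - s_j = \tau_j - \tau_i \leq w_{ij}$. For Type 2 constraints I would argue that $s_i = -\tau_i$ lies in some domain of $D_i$ for every $i \in T_D$, because the last time $i$ was popped, \textsc{DomainCheck} either left $\tau_i$ within $[l^{z_i}_i, u^{z_i}_i]$ or advanced $z_i$ until it did, and any subsequent change to $\tau_i$ would have re-enqueued $i$. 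Combining feasibility with the invariant, $s_i = -\tau_i \leq s^*_i$ for any feasible $s^*$, so $s$ is coordinate-wise minimal and therefore the earliest feasible solution. The two infeasibility cases (negative cycle detected via $\pi_j \geq |T|$ or a relaxation reaching $\alpha$, and \textsc{DomainCheck} exhausting $D_i$) each yield a proof of infeasibility by the invariant.

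The hardest part will be the termination argument in the presence of \textsc{DomainCheck}. I would argue that every update to $\tau_i$ either comes from a standard relax (which strictly decreases $\tau_i$) or from \textsc{DomainCheck} (which strictly advances $z_i$, bounded above by $|D_i|$). Because $z_i$ is monotonically non-decreasing and globally bounded, \textsc{DomainCheck} contributes only $O(\omega)$ jumps overall, and between these the algorithm behaves like standard Bellman-Ford on $G_D$ with periodically tightened arc weights leaving $\alpha$. The negative-cycle detection via $\pi_j \geq |T|$ continues to function correctly because \textsc{DomainCheck}'s modification of $\pi_i$ still certifies that the current $\tau_i$ witnesses a simple explanation path of that length; thus if the SDTN admits no feasible schedule due to a genuine negative cycle in the distance graph, the threshold will be reached and infeasibility will be reported in finite time.
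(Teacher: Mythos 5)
Your proof is correct, and it takes a noticeably different route from the paper's. The paper argues via monotonicity: $\tau$ is non-increasing, hence $z$ is non-decreasing, hence the domain assignment is a backtrack-free search, and it then asserts that the final $\tau$ holds the shortest paths under the minimum feasible domain assignment (leaving the reader to supply why that makes $s=-\tau$ both feasible and coordinate-wise minimal). You instead organize everything around a single explicit invariant --- $-\tau_i \leq s^*_i$ for every feasible $s^*$ --- proved by induction over the two mutation types, and you derive from it both the soundness of the two infeasibility exits and, combined with a separate feasibility-at-termination check (empty queue implies all arcs relaxed, so Type~1 holds; \textsc{DomainCheck} plus re-enqueueing implies Type~2 holds), the earliest-solution property. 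This buys you a cleaner and more self-contained minimality argument and makes explicit the step the paper glosses over, at the cost of a longer write-up; you also fold in a termination sketch, which the paper handles separately in Lemma~\ref{lem:bfdc-time}. Two small points worth tightening: the lower bound $l^{1}_i$ of the first domain is enforced by the arc $(\alpha,i,-L(D_i))$ in $G_D$ rather than by \textsc{DomainCheck} (which only imposes $l^{z_i}_i$ via line~4 of Algorithm~\ref{alg:check} after a jump), so your claim that \textsc{DomainCheck} leaves $\tau_i$ within $[l^{z_i}_i,u^{z_i}_i]$ needs that arc to cover the $z_i=1$ case; and your assertion that the $\pi_j \geq |T|$ threshold "continues to function correctly" is stated rather than proved --- though the paper's own proof is equally terse on the soundness of that negative-cycle test, so this is not a gap relative to the paper.
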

\begin{proof}
    First, note that $\tau$ is always non-increasing in BFDC. This implies that the SDTP solution $s=-\tau$ is non-decreasing. In every \textsc{Relax} phase BFDC assigns the shortest path up to a subset of nodes in $G_D$ and therefore assigns the earliest feasible values to a subset of time-points. Whenever a \textsc{DomainCheck} phase must increase the assignment to $z_i$ because $-\tau_i > u^{z_i}_i$, it assigns the earliest feasible domain and either decreases $\tau_i$ or leaves $\tau_i$ unchanged (lines 4-5 in Algorithm \ref{alg:check}). Value $\tau_i$ is non-increasing and consequently decreasing the assignment of $z_i$ will never lead to a feasible solution. Therefore, $z$ is also non-decreasing in BFDC which implies domain assignment is a backtrack-free search.
    
    With these facts in mind, we can now show that there are two possibilities at the end of BFDC. If $\tau \neq \emptyset$ then $s=-\tau$ is the earliest feasible solution for the SDTP instance. This is true because $\tau$ contains the shortest paths in $G_D$ from $\alpha$ to every other node $i \in V$, with this achieved by using the minimum feasible assignment of domains $z$. When $\tau = \emptyset$ then we have either exhausted the assignment $z_i$ to a time-point $i \in T$ which implies that BFDC has run out of domains for $i$ ($z_i > |D_i|$), or there is a negative-cost cycle formed by $C_1$ constraints which has been detected during the \textsc{Relax} phase (line 14 in Algorithm \ref{alg:bfdc}).
\end{proof}

\begin{algorithm}
  \caption{\textsc{DomainCheck}}
  \label{alg:check}
  \footnotesize
  \begin{algorithmic}[1] 
    \Require{Time-point $i$, distance array $\tau$, domain index array $z$, path length array $\pi$}
    \If{$-\tau_i > u^{z_i}_i$}\Comment{Domain $[l^{z_i}_i,u^{z_i}_i] \in D_i$}
    \For{$z_i=z_i+1$ \textbf{ until } $|D_i|$}
    \If{$-\tau_i \leq u^{z_i}_i$}
        \State $\tau_i \gets \min\{\tau_i,-l^{z_i}_i\}$
        \IfThen{$\tau_i = -l^{z_i}_i$}{$\pi_i \gets 1$}
        \State \textbf{break}
    \EndIf
    \EndFor
    \IfThen{$-\tau_i > U(D_i)$}{$\tau \gets \emptyset$} \Comment{No domain can accomodate current $\tau_i$ assignment}
    \EndIf
  \end{algorithmic}
\end{algorithm}

It is possible to show that Lemma \ref{lem:bfdc-correct} also holds for the reversed case: producing the \textit{latest feasible solution}. For that, domains are sorted in descending order and computations occur over graph $G_R$ instead of $G_D$. This requires minor changes to how Algorithm \ref{alg:check} works to account for the reversed order of domains., with the general reasoning concerning how the algorithm operates remaining the same. The latest feasible solution $s$, if it exists, can be retrieved directly via $s=\tau$.  Let us now turn to the asymptotic worst-case time complexity of BFDC which is established via Lemma \ref{lem:bfdc-time}.

\begin{lemma}
    BFDC stops within a number of iterations proportional to $O(|T||C| + |T|\omega)$. \label{lem:bfdc-time}
\end{lemma}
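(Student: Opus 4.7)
The plan is to bound the total work of BFDC via a pop-counting argument. I aim to show that each node $j$ is popped at most $O(|T|+|D_j|)$ times, so that $\sum_j p_j\cdot \text{outdeg}(j) \le |T|\sum_j \text{outdeg}(j) + \sum_j |D_j|\cdot \text{outdeg}(j) \le |T||C|+|T|\omega$, using $\text{outdeg}(j)\le |T|$ and $\sum_j \text{outdeg}(j) = O(|C|)$.

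First, I observe that a node $j$ is enqueued only at line~15 of Algorithm~\ref{alg:bfdc}, whenever the condition $\tau_j > \tau_i + w_{ij}$ strictly decreases $\tau_j$; \textsc{DomainCheck} itself never enqueues a node, even when it lowers $\tau_j$ on line~4 of Algorithm~\ref{alg:check}. Therefore $p_j$ is at most the number of strict relaxation-induced decreases of $\tau_j$. Because $z_j$ is globally monotone non-decreasing and bounded by $|D_j|$, the for loop in Algorithm~\ref{alg:check} accumulates only $O(|D_j|)$ iterations across all calls involving $j$, contributing $O(\omega)$ amortized work in total, which is absorbed by the main bound.

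Second, I invoke the path-length counter $\pi_j$ to bound relaxation-induced decreases. Every successful relaxation sets $\pi_j \gets \pi_i + 1$, and line~14 terminates execution when $\pi_j \ge |T|$; moreover, line~5 of Algorithm~\ref{alg:check} resets $\pi_j$ to $1$ each time \textsc{DomainCheck} actually lowers $\tau_j$. Between two consecutive resets at most $|T|-1$ relaxation-induced decreases of $\tau_j$ can occur, and since $z_j$ can increment at most $|D_j|-1$ times this yields the first-pass per-node bound $p_j \le (|T|-1)|D_j|$. Summing $\sum_j p_j\cdot\text{outdeg}(j)$ with this weaker estimate and $\text{outdeg}(j)\le |T|$ already produces $O(|T|^2\omega)$, which does not match the claim.

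To collapse the first-pass bound to the sharper $O(|T|+|D_j|)$, I use that the full vector $\tau$ is monotone non-increasing (Lemma~\ref{lem:bfdc-correct}): any relaxation-driven decrease that occurs after a $\pi_j$ reset must strictly undercut every value propagated before the reset, so the $|T|-1$ relaxation budget is consumed cumulatively across the entire execution rather than once per domain epoch. A charging argument then attributes every pop of $j$ beyond the initial $|T|-1$ to a distinct advance of $z_j$, of which there are at most $|D_j|-1$. One convenient way to formalise this charging is to view BFDC as queue-based \textsc{BellmanFord} running on an augmented distance graph with $|C|+\omega$ arcs, where the original simple temporal arcs are supplemented by one virtual arc from a virtual source to $j$ per domain of $j$, so that the classical $O(|V|\cdot|E|)$ label-correcting bound directly yields $O(|T|(|C|+\omega))$. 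The main obstacle will be making this charging precise, since BFDC activates the virtual arcs lazily as $z_j$ advances rather than considering all of them at once; the equivalence with the static augmented graph relies on the coupled monotonicity of $\tau$ (non-increasing) and $z$ (non-decreasing), without which the analysis degrades back to $O(|T|^2\omega)$.
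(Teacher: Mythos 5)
Your overall strategy (bound the number of times each node is popped, then sum $p_j\cdot\textsc{OutDeg}(j)$ plus the amortized $O(\omega)$ cost of \textsc{DomainCheck}) is genuinely different from the paper's proof, which is a phase argument: as in FIFO label-correcting \textsc{BellmanFord}, there are $O(|V|)$ phases, each node is extracted from $Q$ at most once per phase, and each extraction costs $O(\textsc{OutDeg}(i)+|D_i|)$, giving $O(|C|+\omega)$ per phase and $O(|T||C|+|T|\omega)$ in total. Your route could in principle reach the same bound, but as written it has a genuine gap exactly where you flag it: the step that collapses the per-node pop count from $(|T|-1)|D_j|$ to $O(|T|+|D_j|)$. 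After \textsc{DomainCheck} advances $z_j$ and resets $\pi_j\gets 1$ (line~5 of Algorithm~\ref{alg:check}), nothing you have established prevents $j$ from again undergoing up to $|T|-1$ further strict relaxation-induced decreases of $\tau_j$ before the next domain advance; the monotonicity of $\tau$ only guarantees that each decrease is strict, not that the $|T|-1$ budget is shared across domain epochs. Consequently the claimed charging of ``every pop beyond the initial $|T|-1$ to a distinct advance of $z_j$'' is asserted rather than proved, and without it your own intermediate estimate $O(|T|^2\omega)$ is all the argument actually delivers.

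The augmented-graph picture (one virtual arc per domain, then invoke the classical $O(|V|\cdot|E|)$ label-correcting bound on a graph with $|C|+\omega$ arcs) is an attractive way to close the gap, and it is close in spirit to what the paper implicitly relies on, but the reduction is not immediate for the reason you yourself identify: BFDC reveals the virtual arcs lazily and overwrites $\tau_j$ in place rather than relaxing all domain arcs in every phase, so the standard induction (``after $k$ phases all labels attainable by paths of at most $k$ arcs are settled'') must be re-proved for the coupled $(\tau,z)$ dynamics, including the effect of resetting $\pi_j$. Either complete that induction --- showing the number of FIFO phases remains $O(|V|)$ despite \textsc{DomainCheck}, which is essentially the paper's one-line claim made precise --- or prove the per-node pop bound directly; as it stands the proposal identifies the right quantities but does not establish the lemma.
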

\begin{proof}
    First, consider that the complexity of \textsc{BellmanFord} is $O(|T||C|)$ over the same graph $G_D$. The addition of \textsc{DomainCheck} does not change the size of queue $Q$ and therefore the overall number of iterations remains the same as standard \textsc{BellmanFord}. The change lies in the computational overhead of each iteration individually.
    
    There are at most $O(|V|)$ phases in \textsc{BellmanFord} with a first-in, first-out queue. In each phase, a node is extracted from $Q$ at most once \cite{book:networks}. In other words: the operations taking place in lines 7-11 of Algorithm \ref{alg:bfdc} are executed at most $|V|$ times per phase. These operations have a complexity equivalent to $O(\textsc{OutDeg}(i) + |D_i|)$, where \textsc{OutDeg}$(i)$ denotes the number of arcs in set $A_D$ which have $i$ as their source. Hence, each phase has complexity $O(\sum_{i \in V} \textsc{OutDeg}(i) + |D_i|)$ which is equivalent to $O(|C| + \omega)$. All together, we arrive at a complexity of $O(|V||C| + |V|\omega)$ which is equivalent to $O(|T||C| + |T|\omega)$ when solving SDTPs because $|V| = |T|$.
\end{proof}

The space complexity of BFDC is $O(|T|)$. The auxiliary arrays $\tau$, $s$, $\pi$, $z$ and queue $Q$ used in BFDC all require additional space proportional to $|T|$.

\subsection{Asymptotic worst-case complexities}

Let us now assess the theoretical complexities of all the algorithms and draw some initial conclusions concerning what one should expect from empirical results. Table \ref{tab:complexities} provides both the asymptotic worst-case time complexity and space complexity for each algorithm according to our implementation. Given that the ILP, CP and SCP models are often solved by means of general-purpose black-box solvers, we opted not to include their theoretical complexities in our analysis.

   \begin{table}[!htb]
       \centering
       \caption{Asymptotic worst-case complexities for each algorithm.}
       \label{tab:complexities}
       \begin{tabular}{lrr}
       \hline
            Algorithm & Time complexity & Space complexity\\
            \hline
            ULT & $O(|T|^3|C|K + |C|^2K^2)$ & $O(|T|^2)$\\
            KAB & $O(|T|^2|C| + \omega^\frac{5}{2})$ & $O(|T|^2 + \omega^2)$\\
            KAJ & $O(|T||C|\log |T| + \omega^\frac{5}{2})$ & $O(|T|^2 + \omega^2)$\\
            CRA & $O(|T||C|\log |T| + |T|\omega)$ & $O(|T|^2)$\\
            RULT & $O(|T|^2|C|K + |T|^2K^2)$ & $O(|T|)$\\
            BFDC & $O(|T||C| + |T|\omega)$ & $O(|T|)$\\
            \hline
       \end{tabular}
   \end{table}

    In terms of worst-case time complexity, BFDC clearly outperforms all other methods. CRA is the second fastest method. Meanwhile, it is difficult to rank KA and RULT because they have different terms which can dominate one another. Note that $\omega = O(|T|K)$ and therefore whenever $\omega \leq (|T|K)^{\frac{4}{5}}$ the second term (max-flow) in KA's complexity is never slower than RULT's second term. In this case, we can limit our comparison to the first term referring to shortest paths. Clearly, both KAB and KAJ are asymptotically faster than RULT in this regard. However, when $\omega \approx |T|K$ the time complexity of KA is lower than RULT's due to the max-flow phase. As previously mentioned, we can also see that the use of Johnson's Algorithm in KA reduces its time complexity, bringing KAJ closer to CRA. Finally, ULT is the slowest algorithm in Table \ref{tab:complexities}, mainly due to its heavy utilization of \textsc{FloydWarshall}.  
    
    The time complexities documented in Table \ref{tab:complexities} are indicative of the challenges faced when solving SDTPs. Despite their close ties to shortest path problems, the presence of negative cycles and disjunctive domains requires more complex and refined techniques. In particular, we wish to call attention to the increased space complexity in most of the established techniques in the literature. Only RULT and BFDC are able to solve SDTPs using linear space. Although this may appear unimportant given the availability of computational resources, a quadratic memory overhead can quickly become prohibitive in practice. This is often problematic given that SDTPs appear as subproblems of other more complex problems which require their own share of memory. We will discuss the impact of memory usage later in Section \ref{sec:discussion}. 
    
    A final remark concerns the relation between SDTPs and time-dependent STPs established in Section \ref{sec:sdtp}. \citet{art:td-stp} showed how the time-dependent STP can be solved in time $O(|T||C|)$. However, despite the relation between the two problems, Table \ref{tab:complexities} shows that solving SDTPs requires asymptotically more time than time-dependent STPs in the worst case. This is primarily due to the discontinuity of time-point domains which renders certain assignments in SDTP solutions infeasible, thereby requiring additional procedures to correct the assignments and (re)check feasibility. When at most one domain exists per time-point ($K \leq 1$), this correction is not necessary.

\section{Experiments} \label{sec:exps}

Experiments were carried out on a computer running Ubuntu 20.04 LTS equipped with two Intel Xeon E5-2660v3 processors at 2.60GHz, with a total of 160 GB RAM, 5 MB of L2 cache and 50 MB of L3 cache. Intel's Hyper-threading technology has been disabled at all times to avoid negatively influencing the experiments. All of the algorithms were implemented using \texttt{C++} and compiled with GNU GCC 9.3 using optimization flag \texttt{-O3}. The ILP, CP and SCP models were implemented using the \texttt{C++} API of CPLEX 12.9. Methods were only allowed one thread during execution.

Our experiments primarily focus on measuring observed computation times. In order to obtain accurate time measurements, we employ \texttt{C++}'s \texttt{std::steady\_clock} to measure CPU time. To ensure as much fairness as possible when comparing methods that differ significantly with respect to the input representation, we decided to document only the computation time for solving an instance. This means our results do not include information concerning the time needed for input, output or preprocessing that is performed by some algorithms to transform data into a more suitable format. Similarly, the time to build ILP, CP and SCP models is not included in their results.

While we understand that evaluating methods with respect to computation times is not always ideal \cite{art:exp1,art:exp2}, it is difficult to obtain a single evaluation metric for algorithms that differ so much in terms of their basic operations and components. Additionally, \citet{art:shapiro} argued that for tractable problems, the running time of algorithms is often a reasonable metric. 

Since we are proposing the first experimental study to evaluate algorithms for solving SDTPs, we introduce four datasets to assess the performance of the various methods and their implementations. The four datasets differ in terms of their problem structure and particularly with respect to the underlying distance graph. Instances are subdivided into \textit{shortest path instances}, \textit{negative cycle instances}, \textit{vehicle routing instances} and \textit{very large instances}.  All of them include only integer values. It is therefore possible to accommodate all methods, including CP and SCP, without any changes. We will begin by first detailing the procedure by which we generated each instance set before presenting the computational results obtained from our experiments.

\subsection{Shortest path instances} \label{sec:sp-instances}

Instances are created with graph generators for Shortest Path (SP) problems. A graph $G=(V,A)$ is transformed into an SDTN $N=(T,C)$ by setting $T = V$, $C_1 = A$ and deriving $C_2$ constraints for the time-point domains from the shortest paths in $G$. Let us define the following parameters for an SDTP instance: number of time-points $|T|$, number of Type 1 constraints $|C_1|$, number of elements with more than one domain $|T_D|$, and number of domains $K > 1$ per $i \in T_D$ such that $|D_i| = K$. There are four SP groups which differ in terms of how either graph $G$ or $C_2$ constraints are created. The generation process for each group is summarized below (for more details see Appendix \ref{ap:instances}).

\begin{enumerate}
    \item \textsc{Rand}: generates graph $G$ using \textsc{Sprand} introduced in the SPLib \cite{art:splib}. Nodes and arcs are all created randomly. Constraints $C_2$ are generated based on the shortest path from a dummy node to every $i \in V$. 
    \item \textsc{Grid}: generates graph $G$ using \textsc{Spgrid}, also introduced in the SPLib \cite{art:splib}. Nodes are generated in a grid format with $X$ layers and $Y$ nodes per layer. Arcs connect nodes within the same layer and to those in subsequent layers. $C_2$ constraints are generated in the same way as for \textsc{Rand}. 
    \item \textsc{Seq}: generates graph $G$ using the tailored generator \textsc{Spseq}. Nodes are generated at random similarly to \textsc{Sprand}. A path connecting all nodes with $|V|-1$ arcs is created where the weight of all arcs is $w_{ij}=1$. Afterwards, the remaining $|A|-|V|+1$ arcs are created at random with greater weights. This creates a known shortest path which may be difficult for some methods to find. $C_2$ constraints are generated in the same way as for \textsc{Rand} and \textsc{Grid}.
    \item \textsc{Late}: generates graph $G$ using either \textsc{Sprand} or \textsc{Spseq}. $C_2$ constraints are created so that at least 60\% of the earliest feasible solutions $s_i$ belong to the last domain of the respective time-point. 
\end{enumerate}

For each of these four datasets, we also create four subsets to assess which key instance characteristics have the biggest impact on algorithmic performance. For each of these subsets we fix three of the parameters of an SDTP, and then vary the fourth. 

\subsubsection{\textsc{Nodes} dataset}

The number of time-points $|T|$ varies in the range \{100,200,\dots,12800,25600\} for dataset \textsc{Nodes}. Other parameters are fixed to $|C_1| = 6\cdot|T|$, $|T_D|=0.8\cdot|T|$ and $K=10$. Five instances were generated for each combination of dataset (1)-(4) and number of time-points $|T|$ (henceforth denoted a configuration): three feasible and two infeasible instances. For example, five instances have been generated for configuration (\textsc{Rand}, \textsc{Nodes}, $|T|=100$). 

Figure \ref{fig:sp-nodes} provides the results for the \textsc{Nodes} subset. Each graph reports the average computation times for each method according to the number of time-points for each dataset. The values reported are the average from 20 runs, so as to mitigate the impact of any outliers due to the short computation times needed to solve SDTPs \cite{art:shapiro}. Additionally, methods are given a time limit of two seconds. If a method timed out for all instances of a given size, we omit these results for clarity. This explains the incomplete curves present in some of the graphs. However, if for some instance sizes a method could solve at least one instance (out of five), we report the averages including potential timeouts. KAJ rarely outperformed KAB. Based on these results, we decided to only show the results for KAB. In the experiments, we will comment on specific differences between the two methods whenever necessary.

\begin{figure}[!htb]
    \centering
    \includegraphics[width=\linewidth]{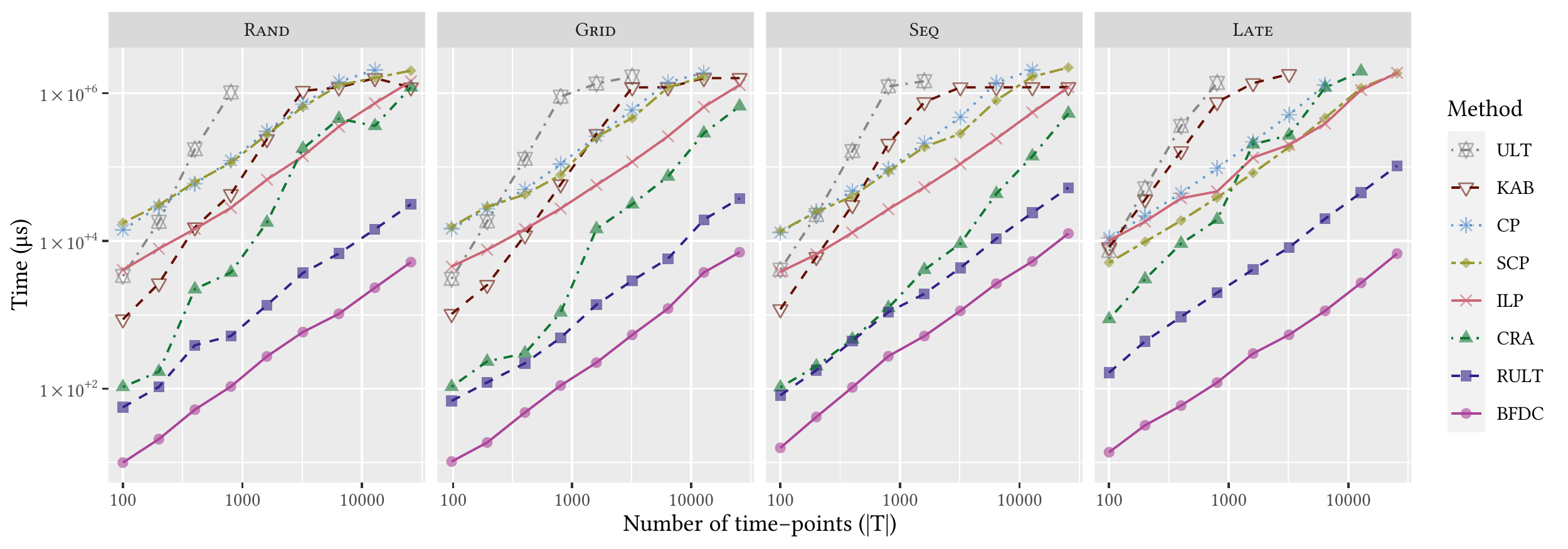}
    \caption{Average computation times in microseconds ($\mu$s) for the \textsc{Nodes} subset. Both $x$- and $y$-axis are reported in log scale.}
    \label{fig:sp-nodes}
\end{figure}

The general performance of the algorithms in graphs of Figure \ref{fig:sp-nodes} is somewhat consistent. We can clearly see a cluster of curves towards the top of the graphs which include ULT, CP, SCP, ILP and KAB. It is also easy to distinguish a second cluster formed by RULT and BFDC at the bottom of the graphs. Meanwhile, CRA lies in-between these two clusters, typically starting at the bottom for small instances and trending towards the top for the largest ones. These differences are not surprising given that most methods in the top cluster are more general than those in the bottom cluster (including CRA). While these differences are to be expected, the question as to whether they hold in different scenarios must still be answered.

ULT demonstrates the fastest growth in the graphs and can rarely solve instances containing $|T| > 1000$. This behavior is easily explained by the use of \textsc{FloydWarshall} in every iteration, which contributes to a $\Theta(|T|^3)$ time complexity. CP and SCP are relatively consistent in execution time, with SCP able to solve slightly larger instances in \textsc{Rand}, \textsc{Seq} and \textsc{Late}. For \textsc{Late} instances, SCP outperformed CP in all scenarios. This showcases how CPLEX as a CP solver can benefit from SCP's model structure. Meanwhile, ILP is the quickest method in the topmost cluster of methods for \textsc{Rand}, \textsc{Grid} and \textsc{SEQ}, while it performs similarly to SCP for \textsc{Late} instances.

One can observe that KAB only outperforms other methods in the topmost cluster for small instances ($|T| \leq 800$). The reason is that computing APSPs requires a significant amount of time and quickly becomes prohibitive for larger instances. For \textsc{Late} instances, KAB was unable to solve those where $|T| > 3200$. This is because \textsc{Late} instances tend to have larger interval sets $E$, which directly impact the creation of the conflict graph (line 3 in Algorithm \ref{alg:kra}) thereby limiting KA's execution time.

CP, SCP, ILP, ULT, and KAB always take longer than one millisecond to compute results. Meanwhile, CRA begins below or at this threshold in all cases and grows quickly, often reaching the one-second threshold. Nevertheless, we can see that CRA typically performs better than ILP. Even when CRA is slower, the differences are not significant. This is true except for the \textsc{Late} instances, where CRA timed out for all five instances with $|T| = 25600$. In these cases, the combination of many time-points and late feasible schedules leads CRA to compute more entries of the distance matrix using \textsc{Dijkstra}, causing major computational overhead for the method (line 8 in Algorithm~\ref{alg:cra}). Both RULT and BFDC always remain below the 100 milliseconds threshold and are faster than CRA. This is despite the fact that RULT has a theoretical worst-case time complexity slower than CRA. We did not observe any timeout for either method in the bottom cluster, which contributes to their lower curves in Figure \ref{fig:sp-nodes}. The relative position of both algorithms is also very consistent, with RULT only slightly slower than BFDC.

\subsubsection{\textsc{Density} dataset}
The second subset is \textsc{Density}, in which we vary the number of constraints $C_1$ in the range  \{20,25,\dots,85,90\}\%, given as a percentage of the maximum number of constraints (maximum number of arcs in the base graph). Other parameters are fixed as follows: $|T|=1008$, $|T_D|=0.8\cdot|T|$ and $K=10$. Similar to \textsc{Nodes}, five instances are generated per configuration. Average computation times according to density growth are shown in Figure \ref{fig:sp-density}.

\begin{figure}[!htb]
    \centering
    \includegraphics[width=\linewidth]{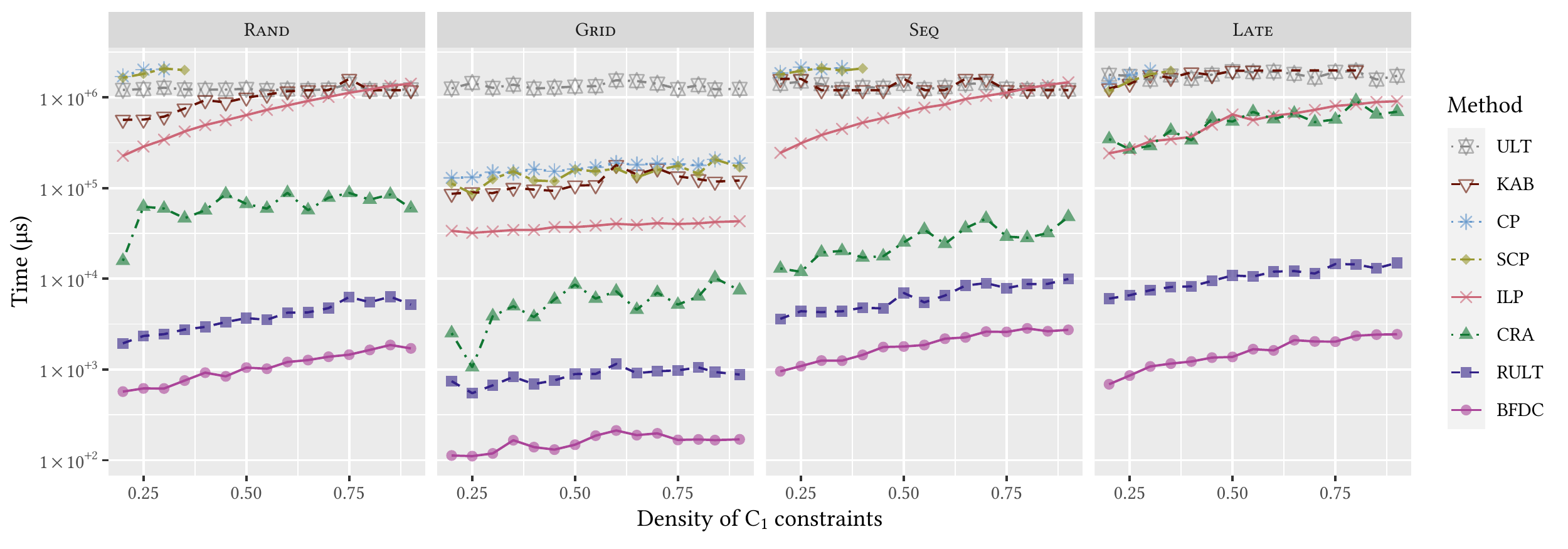}
    \caption{Average computation times for the \textsc{Density} subset. The $y$-axis is reported in log scale.}
    \label{fig:sp-density}
\end{figure}

On the one hand, the performance of ULT barely changes with respect to varying densities due to \textsc{FloydWarshall}'s phase which maintains ULT among the slowest methods. On the other hand, ULT was always able to solve at least one instance per configuration which is not true for all methods. Except for the \textsc{Grid} instances, CP and SCP experience difficulties solving problems with a density greater than $40\%$. KAB far outperforms ULT for \textsc{Grid} instances, but generally performs similarly to ULT in all other cases. KAB is also never slower than CP or SCP. In terms of the topmost cluster, ILP was consistently the best performing method.

Here we notice that CRA demonstrates far better performance than ULT, CP, SCP, ILP and KAB for the \textsc{Rand}, \textsc{Grid} and \textsc{Seq} instances. It remains a sort of middling algorithm, but shows little variation in performance resulting from the network's density. However, for the \textsc{Late} instances, CRA's performance compares to that of the ILP.
Similar to the \textsc{Nodes} dataset, this is explained by the overhead incurred by \textsc{Dijkstra} computations. For the \textsc{Density} instances, however, the network is more connected, leading to more time-points being affected by changes made to others. This in turn also requires more assignment updates.

Both RULT and BFDC also show little variation with respect to the network's density. They remain the fastest algorithms, with no timeouts observed. 

\subsubsection{\textsc{NumDisj} dataset} 

In the third subset, we vary the number of domains $K$ per time-point $i \in T_D$ in the range \{5,10,20,\dots,90,100\}. Other parameters are fixed as follows: $|T|=2000$, $|C_1| = 12000$ and $|T_D|=1600$. Figure \ref{fig:sp-numd} shows the average computation times according to parameter $K$.

\begin{figure}[!htb]
    \centering
    \includegraphics[width=\linewidth]{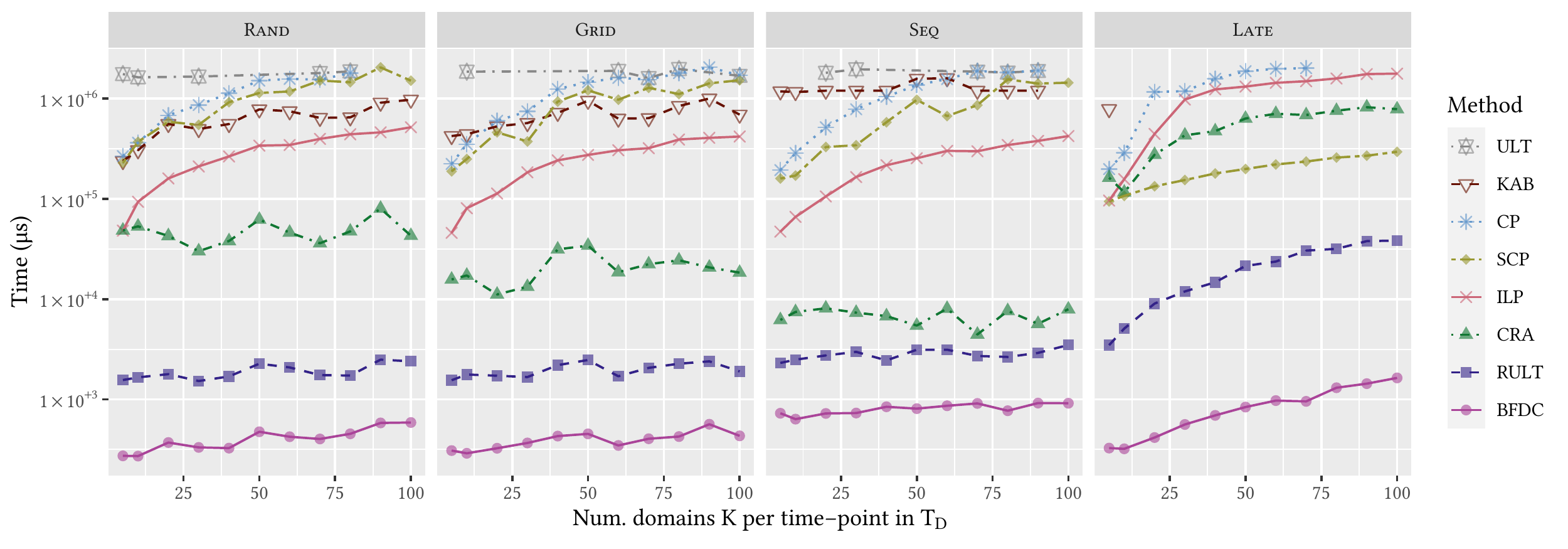}
    \caption{Average computation times for the \textsc{NumDisj} subset. The $y$-axis is reported in log scale.}
    \label{fig:sp-numd}
\end{figure}

In these experiments, we can still differentiate the three clusters of methods from before, but their individual behaviors are now distinct. ULT suffers far more timeouts and often cannot solve a single instance in any configuration. However, this appears unrelated to parameter $K$ and more due to some other specific instance characteristic that was not captured in these experiments. CP and SCP can solve most instance sizes. Additionally,  SCP is faster and can solve more instances as the value of $K$ increases compared to CP. This difference is pronounced for dataset \textsc{Late}, where SCP not only outperforms CP but also CRA and ILP in all cases. Despite its slow performance for the \textsc{Late} instances, ILP outperforms ULT, CP, SCP and KAB across all other configurations.

KAB experiences the same difficulties solving \textsc{Late} instances, where only the smallest ones with $K=5$ were solved. This is unsurprising since $\omega$ is directly related to $K$ (recall Section \ref{sec:sdtp}). KAJ performed slightly better than KAB and was able to solve \textsc{Seq} instances with $K \leq 90$. CRA again outperforms all methods in the topmost cluster except for the \textsc{Late} instances, where SCP is faster. For \textsc{Seq}, we notice the power of these \textsc{Dijkstra} computations because they help CRA to easily find the hidden shortest path used during the instance's construction. This then leads to much shorter executions. RULT and BFDC remain the fastest methods. However, RULT is clearly impacted to a far greater extent by the growth in the number of time-point domains compared to BFDC for the \textsc{Late} instances. This observation is aligned with their asymptotic worst-case time complexities.

The graphs in Figure \ref{fig:sp-numd} suggest that the methods solved using CPLEX (CP, SCP and ILP) are those most impacted by increases in $K$. This may be due to the number of constraints created when more domains exist and an increase in the cardinality of sets $\Phi_i$ in the SCP model.

\subsubsection{\textsc{VarDisj} dataset} The fourth and final subset for SP instances is \textsc{VarDisj}. This subset varies the size of $T_D$ in the range \{10,20,\dots,90,100\}\%, given as a percentage of the total number of time-points $|T|$ that have multiple domains. Other parameters are fixed as follows: $|T|=2000$, $|C_1|=6\cdot|T|$ and $K=10$. Figure \ref{fig:sp-vard} provides computation runtime results according to the size of set $T_D$.

\begin{figure}[!htb]
    \centering
    \includegraphics[width=\linewidth]{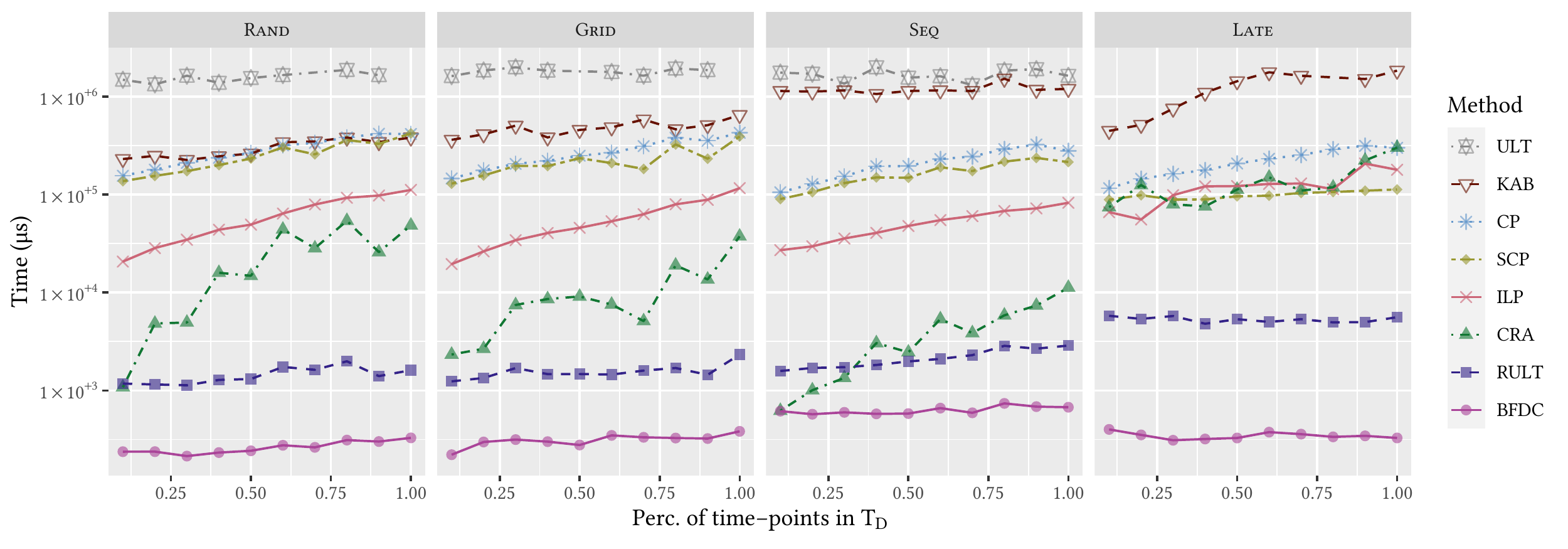}
    \caption{Average computation times for the \textsc{VarDisj} subset. The $y$-axis is reported in log scale.}
    \label{fig:sp-vard}
\end{figure}

Figure \ref{fig:sp-vard} shows how the results for this subset of instances differ significantly from the previous subsets. First, for instances \textsc{Rand}, \textsc{Grid} and \textsc{Seq} the methods are now far more distinctly dispersed across different parts of the graphs. ULT is once again the slowest method, highlighting its difficulty in computing APSPs. KAB competes with CP and SCP in dataset \textsc{Rand}, but is clearly slower than these methods for \textsc{Grid} and \textsc{Seq}. Similar to \textsc{NumDisj}, KAJ demonstrated slightly better performance for \textsc{Seq} instances than KAB, yet not enough to outperform CP or SCP. Indeed, these two methods always outperform KAB and ULT. Meanwhile, ILP is quicker than the previous four methods, maintaining its somewhat consistent behavior as the best general-purpose algorithm for solving SDTPs. CRA is clearly the method that suffers the most from an increasing number of time-points that have multiple domains. Nevertheless, for the first three datasets, CRA is always faster than ILP and for the \textsc{Seq} instances even outperforms RULT for small ones. Finally, RULT and BFDC remain the fastest methods and as the size of set $T_D$ increases there is little noticeable impact on their performance.

When we consider the \textsc{Late} instances, which are arguably the most difficult to solve, the situation changes. ULT cannot solve a single instance in this set. KAB appears as the slowest method. Meanwhile, CP, SCP, ILP and CRA are all clustered below KAB. CP is clearly the slowest of the four methods. ILP and CRA exhibit some variations, but overall their growth trend is far less pronounced than for the other three sets (\textsc{Rand}, \textsc{Grid} and \textsc{Seq}). SCP is very consistent across all experiments and for large $T_D$ sets outperforms the other three algorithms, albeit not by a very large margin. RULT is a whole order of magnitude slower than BFDC for almost all datasets except for \textsc{Seq}. Nevertheless, neither RULT nor BFDC exhibit any significant variations in performance.

\subsection{Negative cycle instances}

The \textsc{Negcycle} instances use the filter of same name proposed by \citet{art:sp-fp}. This filter is applied to instances from Section \ref{sec:sp-instances} by introducing a negative cycle into the underlying base graph. Only instances that are feasible before applying the filter are considered so that they become infeasible precisely due to the negative cycle. Similar to \citet{art:sp-fp}, we consider four classes of negative cycles: (\textsc{Nc02}) one cycle with three arcs; (\textsc{Nc03}) $\lfloor \sqrt{|T|} \rfloor$ cycles with three arcs each; (\textsc{Nc04}) $\lfloor \sqrt[3]{|T|} \rfloor$ cycles with $\lfloor \sqrt{|T|} \rfloor$ arcs each; and (\textsc{Nc05}) one Hamiltonian cycle. 

For each one of these four classes, we vary $|T|$ in the range \{100,200,\dots,12800,25600\} while fixing parameters $|C_1|=6\cdot|T|$, $|T_d|=0.8\cdot|T|$, $K=10$. For each configuration, we generate three instances. Figure \ref{fig:nc} provides the computation results for each of the negative cycle classes, with the average time to prove infeasibility reported for each method over 20 runs.

\begin{figure}[!htb]
    \centering
    \includegraphics[width=\linewidth]{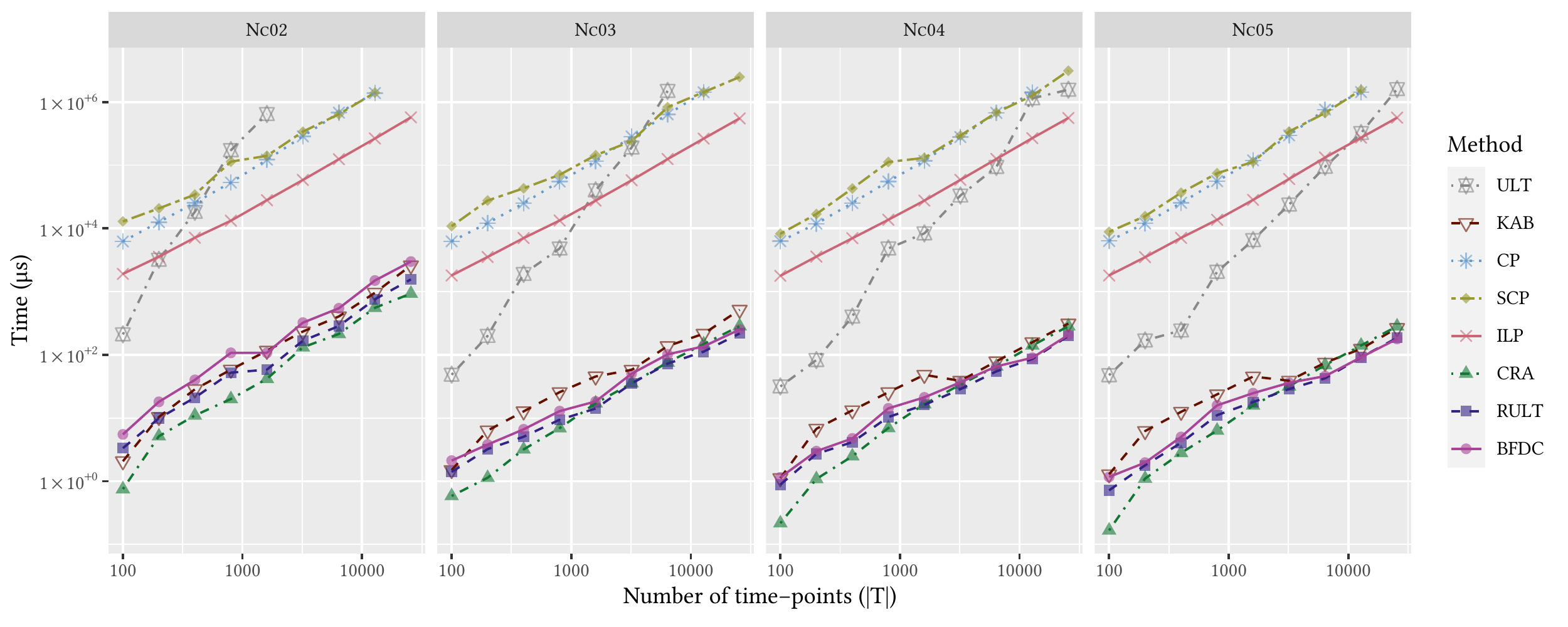}
    \caption{Average computation times for the \textsc{Negcycle} datasets. Both the $x$- and $y$-axis are reported in log scale.}
    \label{fig:nc}
\end{figure}

An initial observation is that ULT experiences greater difficulty solving instances with fewer and smaller cycles (\textsc{Nc02} and \textsc{Nc03}). CP was unable to solve the largest instances in any of the cases, while SCP was able to prove infeasibility of the large instances when more than one cycle existed (\textsc{Nc03} and \textsc{Nc04}). Meanwhile, ILP demonstrated a very consistent performance despite the cycles.

For the first time, we notice KAB among the fastest algorithms at the bottom of the graphs. This is not surprising since for \textsc{Negcycle} instances KAB is able to detect the cycle when computing distance matrix $\delta$ (line 1 in Algorithm \ref{alg:kra}). Also for the first time, BFDC demonstrates consistently the lowest execution time for \textsc{Nc02} compared to CRA, RULT and KAB. For the same dataset, CRA is consistently the fastest method because it can also detect negative cycles in its first phase with standard \textsc{BellmanFord} (line 1 in Algorithm \ref{alg:cra}). This showcases the overhead incurred by \textsc{DomainCheck} in BFDC (line 8 in Algorithm \ref{alg:bfdc}).

However, in datasets \textsc{Nc03}-\textsc{Nc05} the methods which feature in the cluster at the bottom are harder to differentiate. KAB typically appears to be the slowest, CRA the fastest, with BFDC and RULT lying somewhere in the middle. Nevertheless, the differences between KAB, RULT, BFDC and CRA for \textsc{Negcycle} instances are negligible for most purposes.

\subsection{Vehicle routing instances}

We extract vehicle routing instances (\textsc{Vrp}) from solutions to Vehicle Routing Problems with Multiple Synchronization (VRPMS) constraints \cite{art:vrpms}. These VRPMS instances contain multiple routes for which departure times and service times must be assigned while complying with synchronization constraints between routes, in addition to maximum route duration constraints. We refer interested readers to Appendix \ref{ap:instances} for more information concerning how exactly these instances have been generated.

\textsc{Vrp} instances primarily differ in terms of their number of time-points, which ranges from 10 to 1300. For each instance size, we again create five instances: three feasible and two infeasible. Figure \ref{fig:vrp} presents the computation times per method according to the number of time-points in the instance, with the values reporting the average over 20 runs. Results are grouped according to instance feasibility given that some differences in performance can be observed depending on whether an instance is feasible.

\begin{figure}[!htb]
    \centering
    \includegraphics[width=\linewidth]{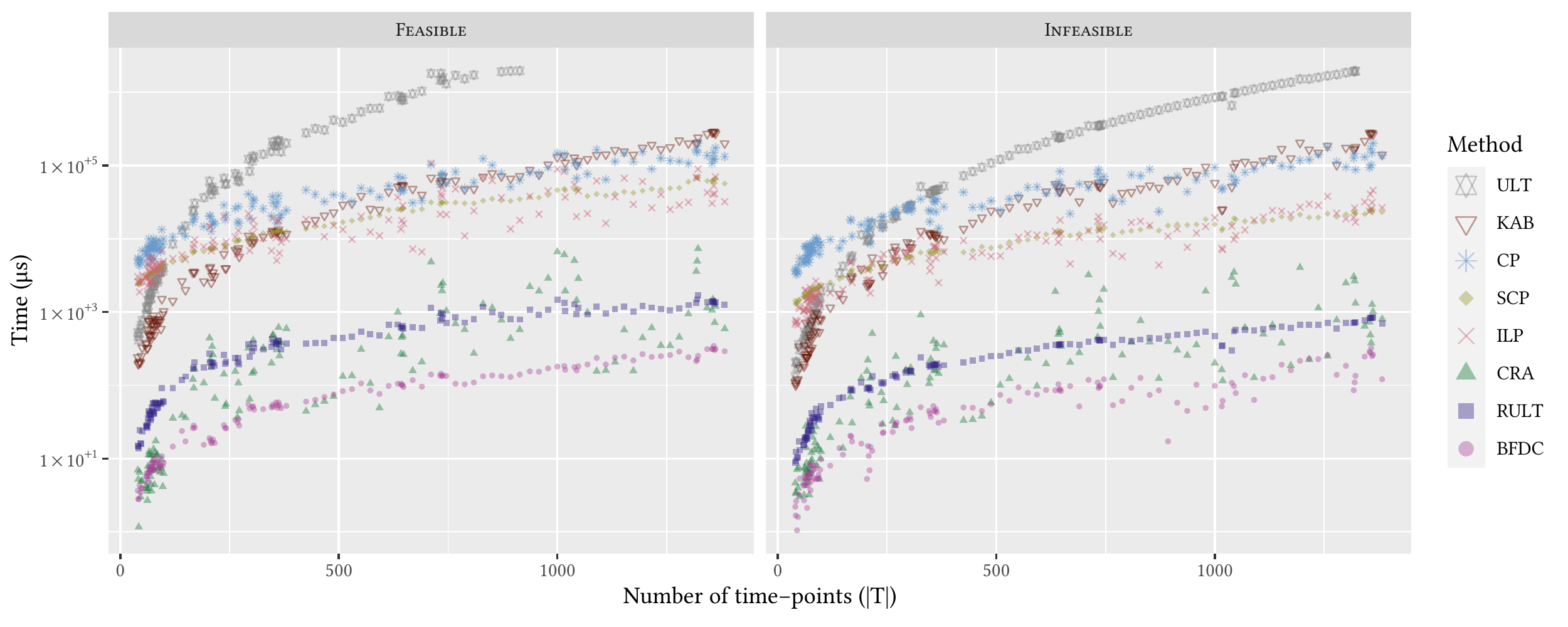}
    \caption{Average computation times for the \textsc{Vrp} dataset. The $y$-axis is reported in log scale.}
    \label{fig:vrp}
\end{figure}

One can quickly notice that ULT can easily prove infeasibility of instances, but it experiences difficulty producing solutions for feasible instances. Indeed, for feasible instances ULT cannot solve those where $|T| > 1000$. CP and SCP exhibit different performances, with SCP faster on average, particularly for infeasible instances. KAB consistently performs similarly to CP. The performance of ILP varies significantly for feasible instances and somewhat less significantly for the infeasible cases. It is difficult to conclude whether ILP is faster than SCP overall, although we can easily see that both methods are faster than ULT, CP and KAB. 

CRA, RULT and BFDC are once again clustered towards the bottom of the graph, signifying that they are the fastest methods. RULT is the slowest among the three, although CRA does vary a lot and is slower for certain cases. BFDC can be concluded to be the fastest method although, yet again, CRA does outperform BFDC in certain cases. Overall though, one can summarize the order from slowest to fastest as follows: RULT, CRA and BFDC.

Finally, note that \textsc{Vrp} instances all have an underlying distance graph which is very sparse. Vertices have at most two outgoing arcs and two incoming arcs, with the exception of those connected to the beginning of the time horizon $\alpha$. Additionally, the graphs are almost acyclical and contain a very limited number of arcs that create cycles. While we do not exploit this structure when solving the \textsc{Vrp} dataset, this would certainly be an interesting avenue for future research.

\subsection{Very large instances}

Although the three previous datasets have diverse characteristics, they all fail to capture scenarios where the number of time-points is very large. These types of instances are important when it comes to truly verifying the scalability of methods which may have advantages for small-scale problems yet suffer when instance size grows significantly. To verify the extent to which our previous analysis holds for such instances, we generate five very large (\textsc{Vl}) problems, all of which are feasible. Table \ref{tab:vl} presents the characteristics of these instances in terms of their \textit{Base} generation method, number of time-points, number of $C_1$ constraints, maximum number $K$ of domains per time-point and total number of domains $\omega$. Base \textsc{Tsp} means that graph $G$ was extracted from the Traveling Salesman Problem Library (TSPLib) \cite{art:tsplib} instance \texttt{pla85900} that contains 85900 nodes. For instance \textsc{Vl}-1, a random subset of nodes is selected from \texttt{pla85900}. Meanwhile, instances \textsc{Vl}-3, \textsc{Vl}-4 and \textsc{Vl}-5 are generated with the SP procedures outlined in Section \ref{sec:sp-instances}.

\begin{table}[!ht]
\caption{Very large instances.}\label{tab:vl}
\centering
\begin{tabular}{llrrrr}
\hline
Instance & Base & $|T|$ & $|C_1|$ & $K$ & $\omega$\\
\hline
\textsc{Vl-1} & \textsc{Tsp} & 50 000 & 500 000 & 20 & 905 000\\
\textsc{Vl-2} & \textsc{Tsp} & 85 900 & 859 000 & 60 & 4 647 190\\
\textsc{Vl-3} & \textsc{Seq} & 200 000 & 2 000 000 & 100 & 16 040 000\\
\textsc{Vl-4} & \textsc{Late} & 400 000 & 4 000 000 & 180 & 57 680 000\\
\textsc{Vl-5} & \textsc{Rand} & 1 000 000 & 10 000 000 & 500 & 400 200 000\\
\hline
\end{tabular}
\end{table}

Figure \ref{fig:vl} provides a graph documenting the average execution times over 10 runs for a subset of the methods with the \textsc{Vl} instances. We opted to test only the best performing methods: ILP, CRA, RULT and BFDC. For each run, the methods were given a one hour time limit.

Even among the four best performing methods, it is obvious that CRA and ILP are limited with respect to the instance sizes they can solve. Indeed, they were unable to solve instances beyond \textsc{Vl-3} within one hour of execution. While it is difficult to determine the precise reason for the behavior of the ILP solver, the reason for CRA lies in how large instances require more \textsc{Dijkstra} computations (in particular for \textsc{Vl-4}). The computation of APSPs is not only slow but also leads to far worse memory locality. This impacts cache usage which reduces the performance of CRA compared to RULT, even though the latter has a theoretically slower worst-case time complexity. Figure \ref{fig:vl} also showcases the fact that \textsc{Late} instances are much more difficult than other instances. This is true even when a \textsc{Late} instance has less than half the number of time-points of a \textsc{Rand} instance.  

\begin{figure}[!htb]
    \centering
    \includegraphics[width=0.9\linewidth]{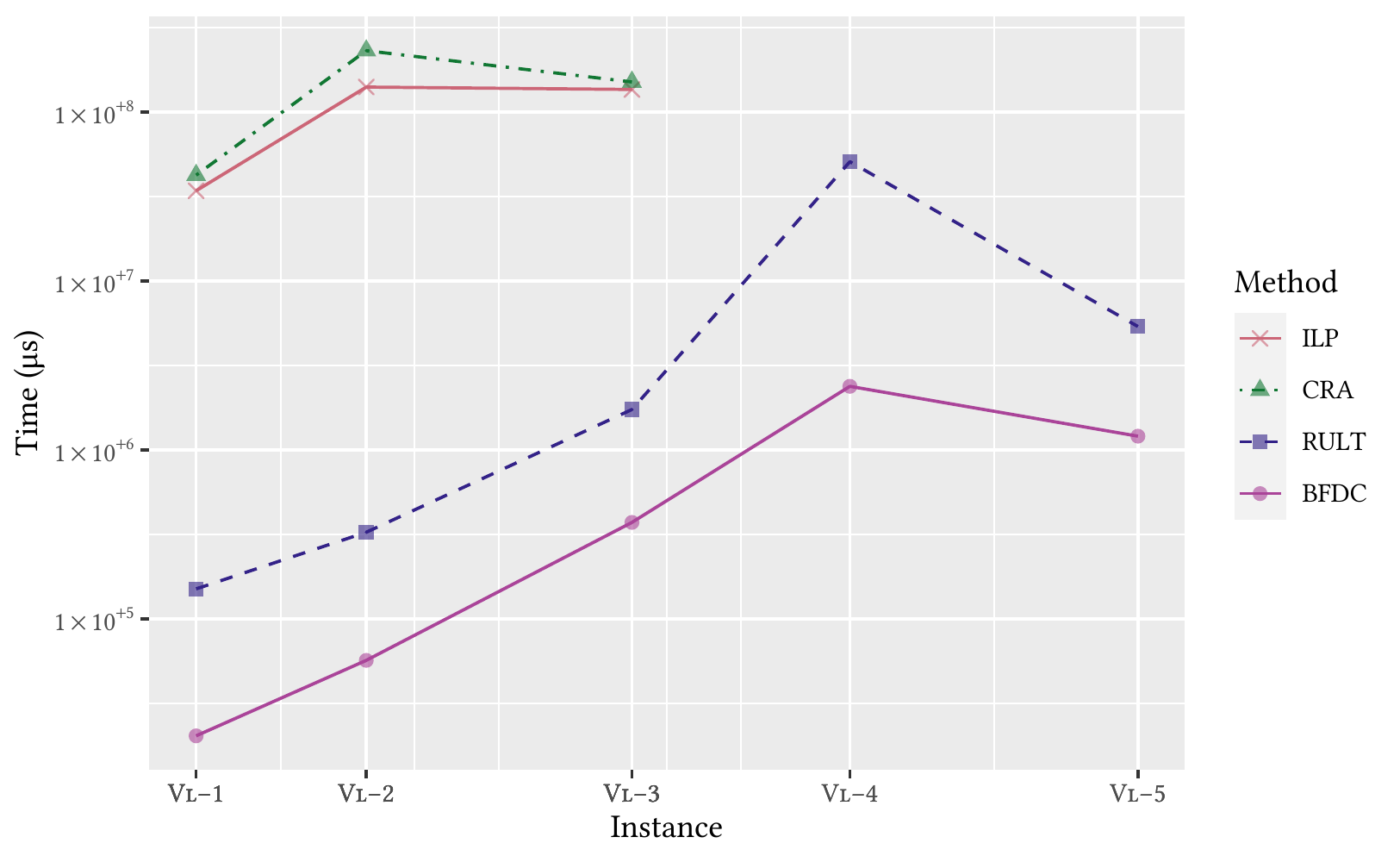}
    \caption{Average computation times for the \textsc{Vl} dataset.}
    \label{fig:vl}
\end{figure}

\section{Discussion} \label{sec:discussion}

 \citet{art:exp2} noted that a lot of what is reported in experimental papers are observations about the \textit{implementation} of an algorithm rather than the algorithm itself as a mathematical object. On the one hand, our study somewhat conforms to this trend. Despite being a well-known limitation of empirical studies, we hope to mitigate this by providing our code so that interested readers can inspect and even improve upon the implementations. On the other hand, some results documented in this paper are implementation-independent. For example: the reduced space complexity achieved by both RULT and BFDC compared to all other methods.

Let us now turn our attention to a broad analysis of the computational study. Table \ref{tab:summary} summarizes the results of the experiments from Section \ref{sec:exps}, with the exception of \textsc{Vl} instances. Columns \textit{Max. time (ms)}, \textit{Avg. time (ms)} and \textit{Std. time (ms)} report the maximum, average and standard deviation of the recorded execution times per method in milliseconds. Column \textit{Total time (s)} reports the total time required by each method to solve all of the instances, including eventual timeouts. Finally, column \textit{Timeouts (\%)} provides the percentage of runs for which the method timed out.

When considering Table \ref{tab:summary}, one must take into account the fact that ULT, CP, SCP and ILP are all more general than CRA, RULT, and BFDC. Hence, it should not come as a surprise that the latter algorithms outperform the former in almost all cases. Nevertheless, our experiments show that there are major differences in performance between algorithms when solving SDTPs and certain conclusions may appear counter-intuitive at first. For example, despite the polynomial worst-case asymptotic time complexity of ULT and KA, these algorithms exhibit poor general performance when solving SDTPs. Meanwhile, ILP demonstrated good performance for a problem that might have initially seemed more suitable for constraint programming.

\begin{table}[!ht]
\caption{Summary of results.}\label{tab:summary}
\centering
\begin{tabular}{lrrrrr}
\hline
Method & Max. time (ms) & Avg. time (ms) & Std. time (ms) & Total time (s) & Timeouts (\%)\\
\hline
ILP & 1947 & 350 & 462 & 7058 & 2.28\\
CP & 2017 & 1062 & 985 & 21406 & 33.39\\
SCP & 2002 & 875 & 954 & 17633 & 25.15\\
ULT & 1990 & 1476 & 812 & 29758 & 66.76\\
KAB & 2000 & 868 & 873 & 17491 & 28.02\\
KAJ & 2001 & 1065 & 1276 & 21478 & 30.26\\
CRA & 1895 & 138 & 317 & 2772 & 1.19\\
RULT & 152 & 6 & 13 & 115 & 0.00\\
BFDC & 28 & 1 & 2 & 18 & 0.00\\
\hline
\end{tabular}
\end{table}

Profiling the implementations of both KAB and KAJ showed that $\approx 90\%$ of their execution time was consistently spent building distance matrix $\delta$ and computing conflicts (lines 1-3 of Algorithm~\ref{alg:kra}). Less than $5\%$ of the total time was observed to be incurred by max-flow computations (line 6 of Algorithm~\ref{alg:kra}). This showcases how the bottleneck is the computation of APSPs and conflicts rather than the theoretically slower max-flow step.

Similarly, profiling the implementation of CRA showed that \textsc{Dijkstra} computations were responsible for up to 95\% of the processing time. This observation includes our lazy evaluation implementation of the distance matrix. When the full matrix is precomputed as originally described by \citet{art:cra}, the proportion of time spent on \textsc{Dijkstra} could grow even more extreme. In many cases, precomputation of the distance matrix was not possible within the imposed time limit. Furthermore, precomputing the distance matrix for large instances is simply not possible due to insufficient memory. In spite of these drawbacks, one advantage of CRA is that some instances can be solved quickly during its first stage (\textsc{BellmanFord}) when the initial solution is already feasible and there is therefore no time-point assignment $s_i$ which must undergo corrections.

The results detailed in Table \ref{tab:summary} further confirm those observed in Section \ref{sec:exps}. RULT and BFDC present the best performances overall, with computation times that are between two and three orders of magnitude shorter than all other methods. There is also no record of either of these algorithms timing out during our experiments. This performance can be explained by two factors. First, both RULT and BFDC focus on computing single-source shortest paths while ULT, KA and CRA consume a lot more computational resources solving APSPs. Second, and this comes as a direct consequence of the first factor, both RULT and BFDC have linear space complexity using only one-dimensional arrays of size $|T|$, thereby improving their cache locality and overall efficiency. Indeed, some instances could be solved almost entirely in cache by these two methods, while the quadratic space complexity of other methods made this far more unlikely.

Figure \ref{fig:cache} illustrates cache reference measurements for CRA, RULT and BFDC when solving the \textsc{Vl} instances. We focus on this dataset because it required the most algorithmic effort. Recording of cache reference events was performed using the \texttt{perf\_events} package from the Linux kernel \cite{web:perf}. The \textit{Full scale} row in the top half of the figure demonstrates how difficult it can sometimes be to compare the behavior of different approaches. This is why we have also included the \textit{Small scale} graphs below, which zoom into the \textit{Full scale} graphs in order to reveal further details concerning behavior of each method. These graphs make it clear how both RULT and BFDC can solve the large instances much quicker simply by using the cache more efficiently. In all experiments both RULT and BFDC required fewer total cache references than the number of cache misses by CRA.

\begin{figure}[H]
    \centering
    \includegraphics[width=0.95\linewidth]{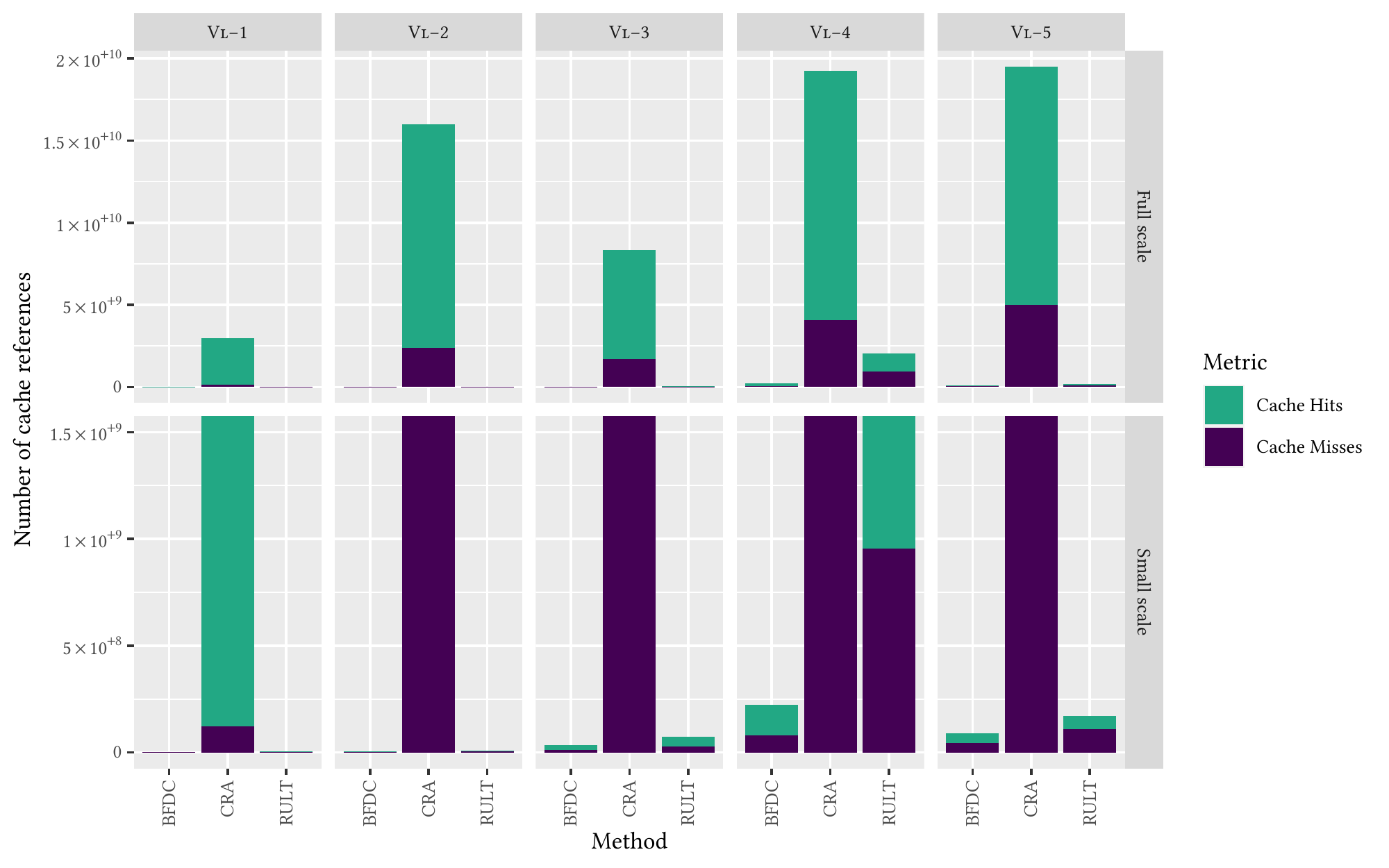}
    \caption{Cache references per method in the \textsc{Vl} dataset.}
    \label{fig:cache}
\end{figure}

The methods implemented in this paper also differ with respect to the type of solution produced. ULT, CRA, RULT and BFDC provide the earliest feasible solution at the end of their execution. However, ULT, RULT and BFDC can also return the latest feasible solution. \citet{art:cra} did not comment on whether their method could return the latest feasible solution, although it appears possible when computing solutions over graph $G_R$ and with some minor changes to operations (e.g., \textsc{UpdateAssignments}). By contrast, CP, SCP, ILP and KA are not guaranteed to return either the earliest or latest feasible solution. One could ensure finding either one of them by defining an appropriate objective function for the underlying model, but it is unclear how much this would impact their performance. For example, KA would require the solution of max-flows with arbitrary arc capacities rather than unitary capacities \cite{art:kumar-estp}.

While one could be tempted to conclude that BFDC and RULT should be the go-to methods when faced with SDTPs, this is not necessarily the conclusion we advocate for. Our advice is instead a little more nuanced. Given that BFDC performed the best for SDTPs on isolation, it represents the most sensible choice when evaluating, for example, the feasibility of interdependent vehicle routes. However, other problems which feature SDTPs may benefit from other algorithms to achieve the best performance. For restricted disjunctive temporal problems, \citet{art:cra} introduced a method which exploits CRA's structure to obtain a low time complexity. In theory, it is also possible to employ BFDC, but this would increase the asymptotic worst-case time complexity of the algorithm for RDTPs. Similarly, \citet{art:kumar-estp} showed that KA can be employed with minor changes to solve SDTPs where each domain is assigned an arbitrary preference weight and the goal is to find a feasible solution which maximizes the sum of the selected domains. In this problem context it is not clear how one could employ BFDC or RULT. On the other hand, we have shown empirically that KA experiences difficulty solving even medium-sized instances. Therefore, it may be worth considering further research on faster methods to solve these SDTPs with preferences.
 
\section{Conclusion}

Simple disjunctive temporal problems generalize simple temporal problems. They have a wide range of real-world applications where they typically arise as subproblems. Some examples of application domains include robot planning, workforce scheduling, logistics and management systems. SDTPs can also be used in decomposition methods to solve more general temporal constraint satisfaction problems. It is therefore of interest for both researchers and practitioners to understand the empirical performance of algorithms for solving SDTPs in addition to their theoretical time bounds. Unfortunately, the literature previously understood very little about these methods in practice.

To bridge this gap and bring theory and experimentation in these temporal problems closer together, we provided a large exploratory and empirical study concerning new and established algorithms for solving SDTPs. Our results indicate that theoretical worst-case time complexities are not necessarily indicative of the observed computation times of these algorithms. Moreover, we showed that the quadratic space complexity of previous algorithms comes with several drawbacks that limit their use in practice, regardless of their asymptotic time complexity. Indeed, for very large datasets, some methods were unable to solve an otherwise simple problem due to memory limitations, even when executed on modern computers. By contrast, algorithms which possess a lower space complexity albeit a higher time complexity solved very large instances within only a few seconds.

We hope that the results of this paper provide useful evidence for future researchers that helps them make informed decisions concerning the best algorithm for their application, thereby reducing reimplementation efforts. The code we have made publicly available will also help future research test whether our conclusions hold for other datasets. Finally, our implementations also provide some common ground for benchmarking new algorithms or speedup techniques for simple disjunctive temporal problems and their special cases.

 In terms of future research, one could consider performing a similar computational experiment for the restricted disjunctive temporal problem \cite{art:kra}. Instances could be derived from those introduced in this paper by adding Type 3 constraints. Another option is to investigate algorithms for variants of the SDTP with preferences associated with each domain of a time-point \cite{art:kumar-estp}. For example, a certain time-point may have greater preference to be executed in the morning rather than in the evening. Another possibility is to extract SDTP instances from real-world applications and verify whether the conclusions from our research remain valid for other graph structures or if better performing methods exist.

\begin{acks}

This research was supported by Internal Funds KU Leuven (IMP/20/021) and by the Flemish Government, Belgium under \textit{Onderzoeksprogramma Artificiële Intelligentie} (AI). Editorial consultation provided by Luke Connolly (KU Leuven).
\end{acks}

\bibliographystyle{ACM-Reference-Format}
\bibliography{biblio}


\begin{thebibliography}{29}


\ifx \showCODEN    \undefined \def \showCODEN     #1{\unskip}     \fi
\ifx \showDOI      \undefined \def \showDOI       #1{#1}\fi
\ifx \showISBNx    \undefined \def \showISBNx     #1{\unskip}     \fi
\ifx \showISBNxiii \undefined \def \showISBNxiii  #1{\unskip}     \fi
\ifx \showISSN     \undefined \def \showISSN      #1{\unskip}     \fi
\ifx \showLCCN     \undefined \def \showLCCN      #1{\unskip}     \fi
\ifx \shownote     \undefined \def \shownote      #1{#1}          \fi
\ifx \showarticletitle \undefined \def \showarticletitle #1{#1}   \fi
\ifx \showURL      \undefined \def \showURL       {\relax}        \fi
\providecommand\bibfield[2]{#2}
\providecommand\bibinfo[2]{#2}
\providecommand\natexlab[1]{#1}
\providecommand\showeprint[2][]{arXiv:#2}

\bibitem[Ahuja et~al\mbox{.}(1993)]%
        {book:networks}
\bibfield{author}{\bibinfo{person}{Ravindra~K. Ahuja},
  \bibinfo{person}{Thomas~L. Magnanti}, {and} \bibinfo{person}{James~B.
  Orlin}.} \bibinfo{year}{1993}\natexlab{}.
\newblock \bibinfo{booktitle}{\emph{Network flows: theory, algorithms, and
  applications}}.
\newblock \bibinfo{publisher}{Prentice-Hall, Inc.}
\newblock


\bibitem[Cesta and Oddi(1996)]%
        {art:stp-bf-inc}
\bibfield{author}{\bibinfo{person}{Amedeo Cesta} {and} \bibinfo{person}{Angelo
  Oddi}.} \bibinfo{year}{1996}\natexlab{}.
\newblock \showarticletitle{Gaining efficiency and flexibility in the simple
  temporal problem}. In \bibinfo{booktitle}{\emph{Proceedings Third
  International Workshop on Temporal Representation and Reasoning (TIME'96)}}.
  IEEE, \bibinfo{pages}{45--50}.
\newblock


\bibitem[Cherkassky et~al\mbox{.}(2010)]%
        {art:sp-fp}
\bibfield{author}{\bibinfo{person}{Boris~V. Cherkassky},
  \bibinfo{person}{Loukas Georgiadis}, \bibinfo{person}{Andrew~V. Goldberg},
  \bibinfo{person}{Robert~E. Tarjan}, {and} \bibinfo{person}{Renato~F.
  Werneck}.} \bibinfo{year}{2010}\natexlab{}.
\newblock \showarticletitle{Shortest-Path Feasibility Algorithms: An
  Experimental Evaluation}.
\newblock \bibinfo{journal}{\emph{ACM J. Exp. Algorithmics}}
  \bibinfo{volume}{14}, Article \bibinfo{articleno}{7} (\bibinfo{date}{jan}
  \bibinfo{year}{2010}), \bibinfo{numpages}{37}~pages.
\newblock
\showISSN{1084-6654}


\bibitem[Cherkassky et~al\mbox{.}(1996)]%
        {art:splib}
\bibfield{author}{\bibinfo{person}{Boris~V. Cherkassky},
  \bibinfo{person}{Andrew~V. Goldberg}, {and} \bibinfo{person}{Tomasz Radzik}.}
  \bibinfo{year}{1996}\natexlab{}.
\newblock \showarticletitle{Shortest paths algorithms: Theory and experimental
  evaluation}.
\newblock \bibinfo{journal}{\emph{Mathematical programming}}
  \bibinfo{volume}{73}, \bibinfo{number}{2} (\bibinfo{year}{1996}),
  \bibinfo{pages}{129--174}.
\newblock


\bibitem[Christiaens and Berghe(2020)]%
        {art:sisrs}
\bibfield{author}{\bibinfo{person}{Jan Christiaens} {and}
  \bibinfo{person}{Greet~Vanden Berghe}.} \bibinfo{year}{2020}\natexlab{}.
\newblock \showarticletitle{Slack Induction by String Removals for Vehicle
  Routing Problems}.
\newblock \bibinfo{journal}{\emph{Transportation Science}}
  \bibinfo{volume}{54}, \bibinfo{number}{2} (\bibinfo{year}{2020}),
  \bibinfo{pages}{417--433}.
\newblock


\bibitem[Comin and Rizzi(2018)]%
        {art:cra}
\bibfield{author}{\bibinfo{person}{Carlo Comin} {and} \bibinfo{person}{Romeo
  Rizzi}.} \bibinfo{year}{2018}\natexlab{}.
\newblock \showarticletitle{{On Restricted Disjunctive Temporal Problems:
  Faster Algorithms and Tractability Frontier}}. In
  \bibinfo{booktitle}{\emph{25th International Symposium on Temporal
  Representation and Reasoning (TIME 2018)}} \emph{(\bibinfo{series}{Leibniz
  International Proceedings in Informatics (LIPIcs)},
  Vol.~\bibinfo{volume}{120})}, \bibfield{editor}{\bibinfo{person}{Natasha
  Alechina}, \bibinfo{person}{Kjetil N{\o}rv{\aa}g}, {and}
  \bibinfo{person}{Wojciech Penczek}} (Eds.). \bibinfo{publisher}{Schloss
  Dagstuhl--Leibniz-Zentrum fuer Informatik}, \bibinfo{address}{Dagstuhl,
  Germany}, \bibinfo{pages}{10:1--10:20}.
\newblock


\bibitem[Cormen et~al\mbox{.}(2009)]%
        {book:cormen}
\bibfield{author}{\bibinfo{person}{Thomas~H. Cormen},
  \bibinfo{person}{Charles~E. Leiserson}, \bibinfo{person}{Ronald~L. Rivest},
  {and} \bibinfo{person}{Clifford Stein}.} \bibinfo{year}{2009}\natexlab{}.
\newblock \bibinfo{booktitle}{\emph{Introduction to algorithms}}.
\newblock \bibinfo{publisher}{MIT press}.
\newblock


\bibitem[Dechter et~al\mbox{.}(1991)]%
        {art:stp}
\bibfield{author}{\bibinfo{person}{Rina Dechter}, \bibinfo{person}{Itay Meiri},
  {and} \bibinfo{person}{Judea Pearl}.} \bibinfo{year}{1991}\natexlab{}.
\newblock \showarticletitle{Temporal constraint networks}.
\newblock \bibinfo{journal}{\emph{Artificial Intelligence}}
  \bibinfo{volume}{49}, \bibinfo{number}{1} (\bibinfo{year}{1991}),
  \bibinfo{pages}{61 -- 95}.
\newblock


\bibitem[Fredman and Tarjan(1987)]%
        {art:fib-heaps}
\bibfield{author}{\bibinfo{person}{Michael~L. Fredman} {and}
  \bibinfo{person}{Robert~Endre Tarjan}.} \bibinfo{year}{1987}\natexlab{}.
\newblock \showarticletitle{Fibonacci Heaps and Their Uses in Improved Network
  Optimization Algorithms}.
\newblock \bibinfo{journal}{\emph{J. ACM}} \bibinfo{volume}{34},
  \bibinfo{number}{3} (\bibinfo{date}{jul} \bibinfo{year}{1987}),
  \bibinfo{pages}{596–615}.
\newblock


\bibitem[Goldberg and Tarjan(1988)]%
        {art:max-flow}
\bibfield{author}{\bibinfo{person}{Andrew~V. Goldberg} {and}
  \bibinfo{person}{Robert~E. Tarjan}.} \bibinfo{year}{1988}\natexlab{}.
\newblock \showarticletitle{A new approach to the maximum-flow problem}.
\newblock \bibinfo{journal}{\emph{Journal of the ACM (JACM)}}
  \bibinfo{volume}{35}, \bibinfo{number}{4} (\bibinfo{year}{1988}),
  \bibinfo{pages}{921--940}.
\newblock


\bibitem[Hojabri et~al\mbox{.}(2018)]%
        {art:vrpms}
\bibfield{author}{\bibinfo{person}{Hossein Hojabri}, \bibinfo{person}{Michel
  Gendreau}, \bibinfo{person}{Jean-Yves Potvin}, {and}
  \bibinfo{person}{Louis-Martin Rousseau}.} \bibinfo{year}{2018}\natexlab{}.
\newblock \showarticletitle{Large neighborhood search with constraint
  programming for a vehicle routing problem with synchronization constraints}.
\newblock \bibinfo{journal}{\emph{Computers \& Operations Research}}
  \bibinfo{volume}{92} (\bibinfo{year}{2018}), \bibinfo{pages}{87 -- 97}.
\newblock


\bibitem[Hunsberger and Posenato(2021)]%
        {art:time21}
\bibfield{author}{\bibinfo{person}{Luke Hunsberger} {and}
  \bibinfo{person}{Roberto Posenato}.} \bibinfo{year}{2021}\natexlab{}.
\newblock \showarticletitle{{Simple Temporal Networks: A Practical Foundation
  for Temporal Representation and Reasoning}}. In
  \bibinfo{booktitle}{\emph{28th International Symposium on Temporal
  Representation and Reasoning (TIME 2021)}} \emph{(\bibinfo{series}{Leibniz
  International Proceedings in Informatics (LIPIcs)},
  Vol.~\bibinfo{volume}{206})}, \bibfield{editor}{\bibinfo{person}{Carlo
  Combi}, \bibinfo{person}{Johann Eder}, {and} \bibinfo{person}{Mark Reynolds}}
  (Eds.). \bibinfo{publisher}{Schloss Dagstuhl -- Leibniz-Zentrum f{\"u}r
  Informatik}, \bibinfo{address}{Dagstuhl, Germany}, \bibinfo{pages}{1:1--1:5}.
\newblock
\showISBNx{978-3-95977-206-8}
\showISSN{1868-8969}


\bibitem[Johnson(2002)]%
        {art:exp2}
\bibfield{author}{\bibinfo{person}{David~S. Johnson}.}
  \bibinfo{year}{2002}\natexlab{}.
\newblock \showarticletitle{A Theoretician’s Guide to the Experimental
  Analysis of Algorithms}.
\newblock \bibinfo{journal}{\emph{DIMACS Series in Discrete Mathematics and
  Theoretical Computer Science}}  \bibinfo{volume}{59} (\bibinfo{year}{2002}).
\newblock


\bibitem[Kumar(2004)]%
        {art:kumar-estp}
\bibfield{author}{\bibinfo{person}{T.K.~Satish Kumar}.}
  \bibinfo{year}{2004}\natexlab{}.
\newblock \showarticletitle{A polynomial-time algorithm for simple temporal
  problems with piecewise constant domain preference functions}. In
  \bibinfo{booktitle}{\emph{AAAI}}. \bibinfo{pages}{67--72}.
\newblock


\bibitem[Kumar(2005)]%
        {art:kra}
\bibfield{author}{\bibinfo{person}{T.K.~Satish Kumar}.}
  \bibinfo{year}{2005}\natexlab{}.
\newblock \showarticletitle{On the Tractability of Restricted Disjunctive
  Temporal Problems.}. In \bibinfo{booktitle}{\emph{ICAPS}}.
  \bibinfo{pages}{110--119}.
\newblock


\bibitem[Kumar et~al\mbox{.}(2013)]%
        {art:stpts}
\bibfield{author}{\bibinfo{person}{T.K.~Satish Kumar},
  \bibinfo{person}{Marcello Cirillo}, {and} \bibinfo{person}{Sven Koenig}.}
  \bibinfo{year}{2013}\natexlab{}.
\newblock \showarticletitle{Simple temporal problems with taboo regions}. In
  \bibinfo{booktitle}{\emph{Twenty-Seventh AAAI Conference on Artificial
  Intelligence}}.
\newblock


\bibitem[Larkin et~al\mbox{.}(2014)]%
        {art:heaps}
\bibfield{author}{\bibinfo{person}{Daniel~H. Larkin},
  \bibinfo{person}{Siddhartha Sen}, {and} \bibinfo{person}{Robert~E. Tarjan}.}
  \bibinfo{year}{2014}\natexlab{}.
\newblock \showarticletitle{A back-to-basics empirical study of priority
  queues}. In \bibinfo{booktitle}{\emph{2014 Proceedings of the Sixteenth
  Workshop on Algorithm Engineering and Experiments (ALENEX)}}. SIAM,
  \bibinfo{pages}{61--72}.
\newblock


\bibitem[Masson et~al\mbox{.}(2014)]%
        {art:darpt}
\bibfield{author}{\bibinfo{person}{Renaud Masson}, \bibinfo{person}{Fabien
  Lehu\'ed\'e}, {and} \bibinfo{person}{Olivier P\'eton}.}
  \bibinfo{year}{2014}\natexlab{}.
\newblock \showarticletitle{The Dial-A-Ride Problem with Transfers}.
\newblock \bibinfo{journal}{\emph{Computers \& Operations Research}}
  \bibinfo{volume}{41} (\bibinfo{year}{2014}), \bibinfo{pages}{12 -- 23}.
\newblock


\bibitem[McGeoch(1996)]%
        {art:exp1}
\bibfield{author}{\bibinfo{person}{Catherine~C McGeoch}.}
  \bibinfo{year}{1996}\natexlab{}.
\newblock \showarticletitle{Toward an experimental method for algorithm
  simulation}.
\newblock \bibinfo{journal}{\emph{INFORMS Journal on Computing}}
  \bibinfo{volume}{8}, \bibinfo{number}{1} (\bibinfo{year}{1996}),
  \bibinfo{pages}{1--15}.
\newblock


\bibitem[Mitrovi{\'c}-Mini{\'c} and Laporte(2006)]%
        {art:pdpt}
\bibfield{author}{\bibinfo{person}{Sne{\v{z}}ana Mitrovi{\'c}-Mini{\'c}} {and}
  \bibinfo{person}{Gilbert Laporte}.} \bibinfo{year}{2006}\natexlab{}.
\newblock \showarticletitle{The pickup and delivery problem with time windows
  and transshipment}.
\newblock \bibinfo{journal}{\emph{INFOR: Information Systems and Operational
  Research}} \bibinfo{volume}{44}, \bibinfo{number}{3} (\bibinfo{year}{2006}),
  \bibinfo{pages}{217--227}.
\newblock


\bibitem[Moret and Shapiro(2001)]%
        {art:shapiro}
\bibfield{author}{\bibinfo{person}{Bernard~ME. Moret} {and}
  \bibinfo{person}{Henry~D. Shapiro}.} \bibinfo{year}{2001}\natexlab{}.
\newblock \showarticletitle{Algorithms and experiments: The new (and the old)
  methodology}.
\newblock \bibinfo{journal}{\emph{Journal of Universal Computer Science}}
  \bibinfo{volume}{7} (\bibinfo{year}{2001}), \bibinfo{pages}{434--446}.
\newblock


\bibitem[Pralet and Verfaillie(2012)]%
        {art:td-stp}
\bibfield{author}{\bibinfo{person}{C{\'e}dric Pralet} {and}
  \bibinfo{person}{G{\'e}rard Verfaillie}.} \bibinfo{year}{2012}\natexlab{}.
\newblock \showarticletitle{Time-dependent simple temporal networks}. In
  \bibinfo{booktitle}{\emph{International Conference on Principles and Practice
  of Constraint Programming}}. Springer, \bibinfo{pages}{608--623}.
\newblock


\bibitem[Reinelt(1991)]%
        {art:tsplib}
\bibfield{author}{\bibinfo{person}{Gerhard Reinelt}.}
  \bibinfo{year}{1991}\natexlab{}.
\newblock \showarticletitle{TSPLIB—A traveling salesman problem library}.
\newblock \bibinfo{journal}{\emph{ORSA journal on computing}}
  \bibinfo{volume}{3}, \bibinfo{number}{4} (\bibinfo{year}{1991}),
  \bibinfo{pages}{376--384}.
\newblock


\bibitem[Sanders(2001)]%
        {art:sequence-heaps}
\bibfield{author}{\bibinfo{person}{Peter Sanders}.}
  \bibinfo{year}{2001}\natexlab{}.
\newblock \showarticletitle{Fast Priority Queues for Cached Memory}.
\newblock \bibinfo{journal}{\emph{ACM J. Exp. Algorithmics}}
  \bibinfo{volume}{5} (\bibinfo{date}{dec} \bibinfo{year}{2001}),
  \bibinfo{numpages}{25}~pages.
\newblock
\showISSN{1084-6654}


\bibitem[Sarasola and Doerner(2020)]%
        {art:delsynch}
\bibfield{author}{\bibinfo{person}{Briseida Sarasola} {and}
  \bibinfo{person}{Karl~F. Doerner}.} \bibinfo{year}{2020}\natexlab{}.
\newblock \showarticletitle{Adaptive large neighborhood search for the vehicle
  routing problem with synchronization constraints at the delivery location}.
\newblock \bibinfo{journal}{\emph{Networks}} \bibinfo{volume}{75},
  \bibinfo{number}{1} (\bibinfo{year}{2020}), \bibinfo{pages}{64--85}.
\newblock


\bibitem[Sartori et~al\mbox{.}(2022)]%
        {art:ssvb-1}
\bibfield{author}{\bibinfo{person}{Carlo~S. Sartori}, \bibinfo{person}{Pieter
  Smet}, {and} \bibinfo{person}{Greet {Vanden Berghe}}.}
  \bibinfo{year}{2022}\natexlab{}.
\newblock \showarticletitle{{Scheduling truck drivers with interdependent
  routes under European Union regulations}}.
\newblock \bibinfo{journal}{\emph{European Journal of Operational Research}}
  \bibinfo{volume}{298}, \bibinfo{number}{1} (\bibinfo{year}{2022}),
  \bibinfo{pages}{76--88}.
\newblock


\bibitem[Schwalb and Dechter(1997)]%
        {art:ult}
\bibfield{author}{\bibinfo{person}{Eddie Schwalb} {and} \bibinfo{person}{Rina
  Dechter}.} \bibinfo{year}{1997}\natexlab{}.
\newblock \showarticletitle{Processing disjunctions in temporal constraint
  networks}.
\newblock \bibinfo{journal}{\emph{Artificial Intelligence}}
  \bibinfo{volume}{93}, \bibinfo{number}{1} (\bibinfo{year}{1997}),
  \bibinfo{pages}{29--61}.
\newblock


\bibitem[Stergiou and Koubarakis(2000)]%
        {art:dtps}
\bibfield{author}{\bibinfo{person}{Kostas Stergiou} {and}
  \bibinfo{person}{Manolis Koubarakis}.} \bibinfo{year}{2000}\natexlab{}.
\newblock \showarticletitle{Backtracking algorithms for disjunctions of
  temporal constraints}.
\newblock \bibinfo{journal}{\emph{Artificial Intelligence}}
  \bibinfo{volume}{120}, \bibinfo{number}{1} (\bibinfo{year}{2000}),
  \bibinfo{pages}{81--117}.
\newblock


\bibitem[Weaver(2013)]%
        {web:perf}
\bibfield{author}{\bibinfo{person}{Vince Weaver}.}
  \bibinfo{year}{2013}\natexlab{}.
\newblock \bibinfo{title}{{The Unofficial Linux Perf Events Web-Page}}.
\newblock
\newblock
\newblock
\shownote{Available online at
  \url{https://web.eece.maine.edu/~vweaver/projects/perf_events/}. Last access
  on: 2022-10-17}.


\end{thebibliography}

\appendix

\section{Instance generation} \label{ap:instances}

In this appendix we provide additional details concerning how some of the instances used during our computational study were generated. We describe in further detail the shortest path instances and the vehicle routing instances.

\subsection{Shortest path instances}

In what follows we expand upon the details already given per \textsc{Sp} dataset in Section \ref{sec:exps}.

\begin{enumerate}
    \item \textsc{Rand}: generates graph $G$ using \textsc{Sprand}, introduced in SPLib \cite{art:splib}. Nodes and arcs are all randomly created. Constraints $C_2$ are created by computing the shortest path $\tau$ from a dummy source node $z$ to every $i \in V$. A number of time-points is randomly added to set $T_D$ until the desired size $|T_D|$ is achieved. For each $j \in T_D$, we generate a random number $\kappa_j$ from the uniform distribution $\mathcal{U}[1,K]$. The $\kappa_j$-th domain of $j$ is then defined as $d^{\kappa_j}_j = [s^0_i - \varphi_1,s^0_i+\varphi_2]$ where $s^0_i=-\tau_{i}$ and $\varphi_1,\varphi_2 \in \mathcal{U}[0,2000]$. Once $d^{\kappa_j}_j$ has been defined, the remaining domains are generated as follows. First, domains $[l^1_j,u^1_j],\dots,[l^{\kappa_j-1}_j,u^{\kappa_j-1}_j]$ are created by working backwards from $[l^{\kappa_j}_j,u^{\kappa_j}_j]$. Then, domain $[l^{\kappa_j-1}_j,u^{\kappa_j-1}_j]$ is defined via $u^{\kappa_j-1}_j=l^{\kappa_j}_j - a - 1$ and $l^{\kappa_j-1}_j=u^{\kappa_j-1}_j - b_j$ where $a \in \mathcal{U}[0,200]$ and $b_j=u^{\kappa_j}_j-l^{\kappa_j}_j$. This is repeated recursively until $[l^1_j,u^1_j]$. Similarly, for the last domains $[l^{\kappa_j+1}_j,u^{\kappa_j+1}_j],\dots,[l^K_j,u^K_j]$ we generate $[l^{\kappa_j+1}_j,u^{\kappa_j+1}_j]$ by setting $l^{\kappa_j+1}_j = u^{\kappa_j}_j + a + 1$ and $u^{\kappa_j+1}_j = l^{\kappa_j+1}_j + b_j$. Again, this is repeated recursively until the $K$-th domain of $j$ has been created. We only accept instances where at least 60\% of the entries $s_i$ in the earliest feasible solution $s$ of the SDTP belong to a domain $[l^c_i,u^c_i],\ c > 1$. This means that feasibility is not ensured by simply assigning the first domain to every time-point.
    \item \textsc{Grid}: generates graph $G$ using \textsc{Spgrid}, also introduced in SPLib \cite{art:splib}. Nodes are generated in a grid format with $X$ layers and $Y$ nodes per layer. Arcs connect nodes within the same layer and to those in subsequent layers. Additional arcs may be included between nodes in different layers. Similar to \citet{art:sp-fp}, we fix $Y=16$ in all graphs generated with \textsc{Spgrid} and vary $X$ as required to create $|T|=16\cdot X$ time-points for the SDTP instance. Varying parameter $Y$ did not change the results significantly. $C_2$ constraints are generated in the same way as for \textsc{Rand}.
    \item \textsc{Seq}: generates graph $G$ using the tailored generator \textsc{Spseq}. Nodes are generated at random, similarly to \textsc{Sprand}. A path connecting all nodes with $|V|-1$ arcs is created, with each arc having weight $w_{ij}=1$. Afterwards, the remaining $|A|-|V|+1$ arcs are created at random with greater weights selected uniformly from $\mathcal{U}[500,20000]$. This creates a known shortest path which some methods may experience difficulty finding. $C_2$ constraints are generated in the same way as for \textsc{Rand}.
    \item \textsc{Late}: generates graph $G$ using both \textsc{Sprand} and \textsc{Spseq}. However, $C_2$ constraints are created differently. First, initial domains $d^0_i = [s^0_i - \varphi_1,s^0_i+\varphi_2]$ are generated as in the \textsc{Rand} dataset for all $i \in T$. Afterwards, we select a subset of time-points to include in $T_D$. For each $j \in T_D$ we set $D_j=\{d^0_j\}$ and continue to iterate as follows. First, a time-point $j \in T_D,\ |D_j| < K$ is randomly selected. We then create a new domain $[l^c_j,u^c_j]$ for $j$ as $l^c_j=-\tau_{j}+5$ and $u^c_j=l^c_j+1$. The new domain is appended to $D_j$ and earlier domains are shifted if $u^{c-1}_j \geq l^c_j$. Once this has been completed, we update the weights in $G$ for arcs $(z,i,w_{zi})$ to $w_{zi}=-l^c_j$ and recompute distances $\tau$. This effectively updates the solutions and adapts the domains that will be generated in subsequent iterations. The next iteration then begins by selecting another $j \in T_D$ and appending a new domain to $D_j$ until $|D_j|=K\ \forall\ j \in T_D$. This procedure enables us to enforce feasible solutions to belong to \textit{later} domains in each time-point $j \in T_D$, potentially meaning that some algorithms will have to run longer if they incrementally build up from the first domain of each time-point or if they have to tighten the global boundaries. We only generate \textsc{Late} instances for which at least 60\% of the entries $s_i$ in the earliest feasible solution $s$ belong to the last domain of their respective time-point.
\end{enumerate}

\subsection{Vehicle routing instances}

Vehicle routing instances are extracted from those introduced for Vehicle Routing Problems with Multiple Synchronization constraints (VRPMS) \cite{art:vrpms}. We solve VRPMS instances using heuristic Slack Induction by String Removals \cite{art:sisrs}. Each new feasible VRPMS solution is saved as a graph which contains multiple chains of nodes (routes) connected via synchronization arcs. Duration arcs are included to create cycles and produce more challenging instances. These problems contain one time window per customer location and can therefore be interpreted as a simple temporal problem~\cite{art:ssvb-1}.

To create SDTP instances, we transform the aforementioned graphs into an SDTN network $N=(T,C)$. The nodes (customers in VRPMS) are transformed into time-points together with the depots from where vehicles depart. $C_1$ constraints are created from the travel times between locations in a route as well as interdependency constraints between two routes. Maximum route duration constraints also feature in $C_1$. For $C_2$, we split the time windows of the VRPMS nodes by simply selecting a gap size $a$ between the domains and then splitting the original time window into $K$ disjunctive domains. For two consecutive intervals $[l^c_i,u^c_i]$ and $[l^{c+1}_i,u^{c+1}_i]$, we set $l^{c+1}=u^c_i+a$. In this way, we maintain as many of the original characteristics of the VRPMS instance as possible while ensuring that we still create a complete SDTP instance from a real application.

\end{document}